\providecommand{\keywords}[1]{\small	\textbf{\textit{Keywords---}} #1}
\newtheorem{lemma}{Lemma}
\newtheorem{proposition}{Proposition}
\newtheorem{theorem}{Theorem}
\newtheorem*{theorem*}{Theorem}
\newtheorem*{proposition*}{Proposition}
\theoremstyle{definition}
\DeclareMathOperator*{\argmax}{arg\,max}
\title{Package Bids in Combinatorial Electricity Auctions: Selection, Welfare Losses, and Alternatives}
\author{Thomas H\"ubner\thanks{Power Systems Laboratory, ETH Z\"urich, Switzerland, Email: thuebner@ethz.ch}, \: Gabriela Hug\thanks{Power Systems Laboratory, ETH Z\"urich, Switzerland, Email: ghug@ethz.ch}}
\date{}
\begin{document}

\maketitle

\vspace{-0.5cm}

\begin{abstract}
A key challenge in combinatorial auctions is designing bid formats that accurately capture agents' preferences while remaining computationally feasible. 
This is especially true for electricity auctions, where complex preferences complicate straightforward solutions.
In this context, we examine the XOR package bid, the default choice in combinatorial auctions and adopted in European day-ahead and intraday auctions under the name ``exclusive group of block bids''.
Unlike parametric bid formats often employed in US power auctions, XOR package bids are technology-agnostic, making them particularly suitable for emerging demand-side participants.
However, the challenge with package bids is that auctioneers must limit their number to maintain computational feasibility.
As a result, agents are constrained in expressing their preferences, potentially lowering their surplus and reducing overall welfare.
To address this issue, we propose decision support algorithms that optimize package bid selection, evaluate welfare losses resulting from bid limits, and explore alternative bid formats. 
In our analysis, we leverage the fact that electricity prices are often fairly predictable and, at least in European auctions, tend to approximate equilibrium prices reasonably well.
Our findings offer actionable insights for both auctioneers and bidders.
\end{abstract}

\keywords{Bidding/Auctions, Electricity Market, Stochastic Programming}

\section{Introduction}

Electricity trading is constrained by complex technical requirements in transmission, production, and consumption,  which underscores the need for \textit{auctions} to facilitate efficient resource allocation and price discovery~\citep{milgrom2017discovering}. Most liberalized power markets feature at least a day-ahead auction, while some, such as the coupled European market, also offer multiple intraday auctions~\citep{graf2024frequent}.
In these auctions, the traded commodities are units of electric power~(MW) for specified future periods. For example, a day-ahead auction might facilitate buying or selling electricity for the 8-9 am period tomorrow, while an intraday auction could allow trading for the 3:15-3:30 pm slot later within the day.
This article addresses the following central question in \textit{electricity auction design}: How can participants best express their preferences to buy or sell future electricity? 

Preferences are expressed through bids, which can take various forms, but must conform to the \textit{bid formats} defined by the auctioneer. Auctions that enable participants to bid on combinations of goods, rather than individual ones, are known as combinatorial auctions~\citep{pekevc2003combinatorial}. Coupled European day-ahead and intraday auctions are examples of such combinatorial auctions, employing the default bid format for these types of auction, the XOR package bid, under the name ``exclusive group of block bids''~\citep{herrero2020evolving}.\endnote{Initially, block bids were limited to constant power profiles across periods, but this changed with the introduction of profile block bids~\citep{herrero2020evolving}. Although terms for package and XOR bids vary between power exchanges, the concept remains consistent. Integrated into the EUPHEMIA market coupling algorithm, XOR bids are available for use by exchanges in all member states~\citep{euphemia}. However, they are most commonly utilized by EpexSpot and Nordpool, which operate in 16 European countries. In 2023, on average, 158 exclusive groups were submitted daily in the European coupled day-ahead auction~\citep{cacm_report_2023}.} This format allows agents to bid on multiple power profiles that they are willing to buy or sell at a given price $p$, while ensuring that the auctioneer can accept at most one of these profiles~(\Cref{fig: xor bid illustration}).

\begin{figure}[t]
\caption{Illustration of an exclusive group of three block bids (XOR package bids).\protect\endnote{The power exchange EpexSpot restricts block bids from including both sell and buy quantities for different time periods, which prevents arbitrageurs from utilizing XOR bids as demonstrated in~\Cref{subfig: illustration arbitrageur}. This limitation is based on historical practices rather than any technical necessity. While so-called loop blocks enable the linkage of a buy and a sell profile~\citep{karasavvidis2023optimal}, they cannot be grouped within an exclusive group. A better approach would be simply allowing a profile to contain both buy and sell quantities, eliminating this unnecessary restriction.}}
\centering
    \begin{subfigure}[c]{0.32\textwidth} 
        \centering
        \caption{Demand side.}
        \label{subfig: illustration demand side}
        \includegraphics[height=3.1cm]{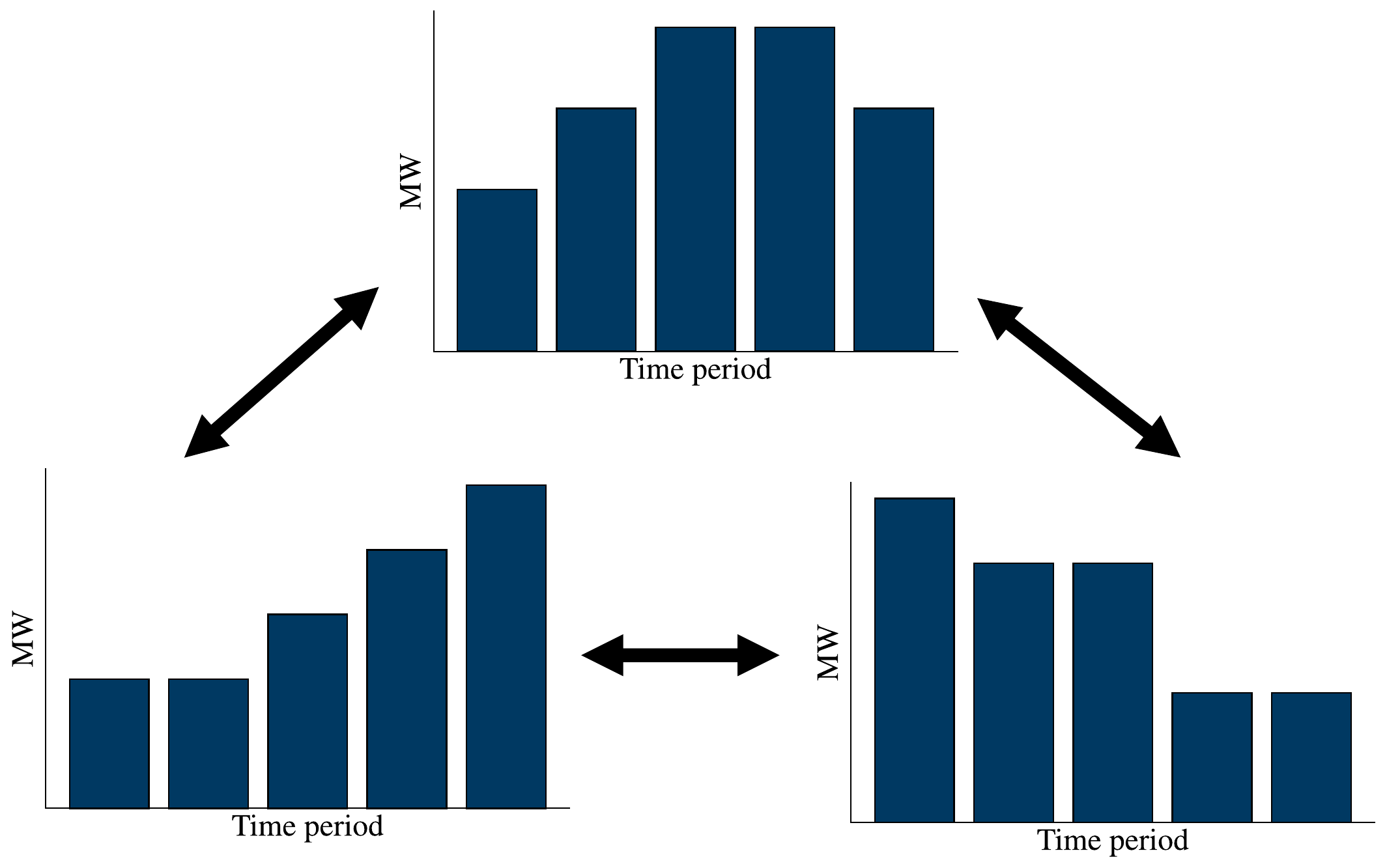}
    \end{subfigure}
    \begin{subfigure}[c]{0.32\textwidth} 
        \centering
        \caption{Arbitrageur.} 
        \label{subfig: illustration arbitrageur}
        \includegraphics[height=3.1cm]{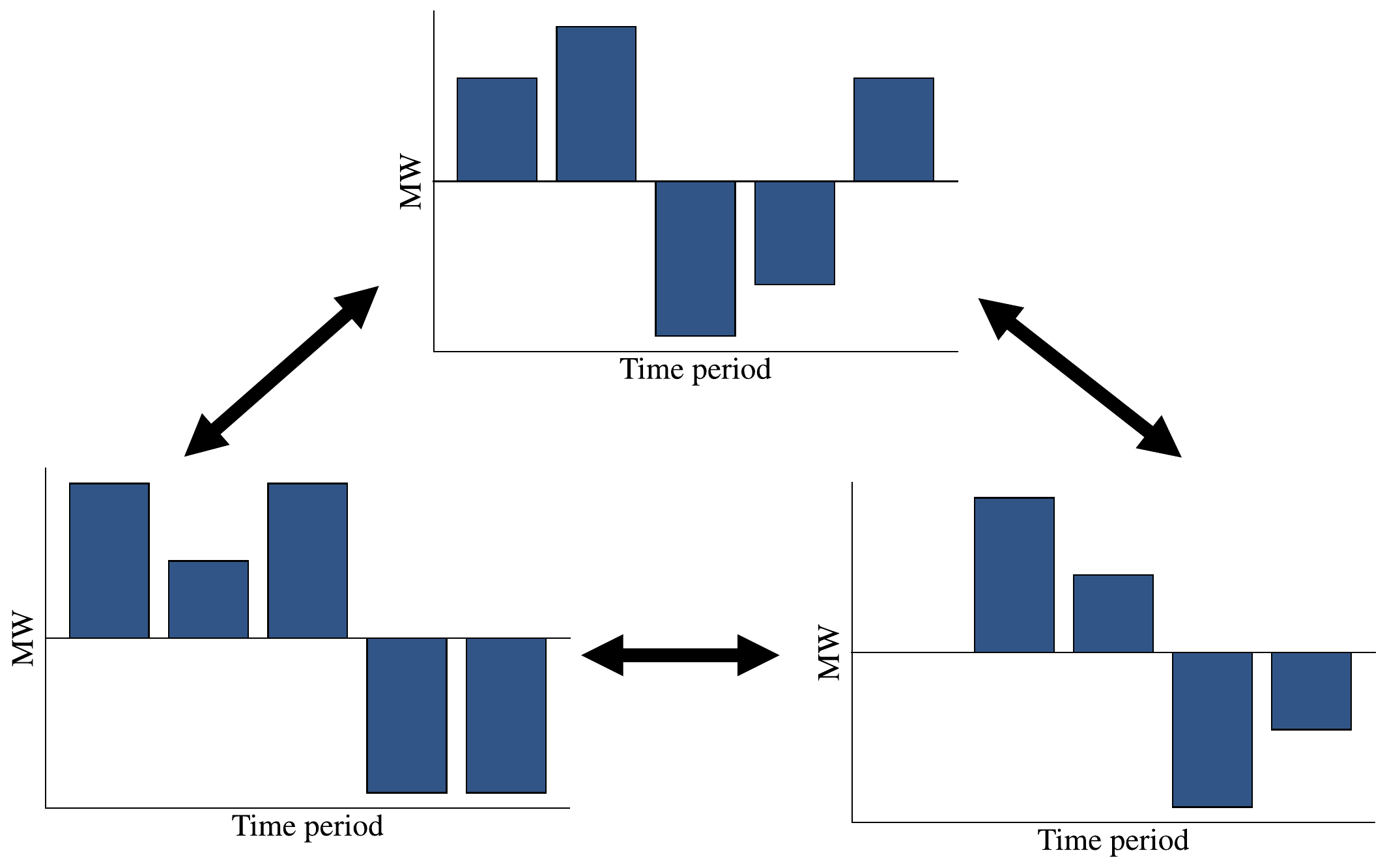}
    \end{subfigure}
    \begin{subfigure}[c]{0.32\textwidth} 
        \centering
        \caption{Supply side.}
        \label{subfig: illustration supply side}
        \includegraphics[height=3.1cm]{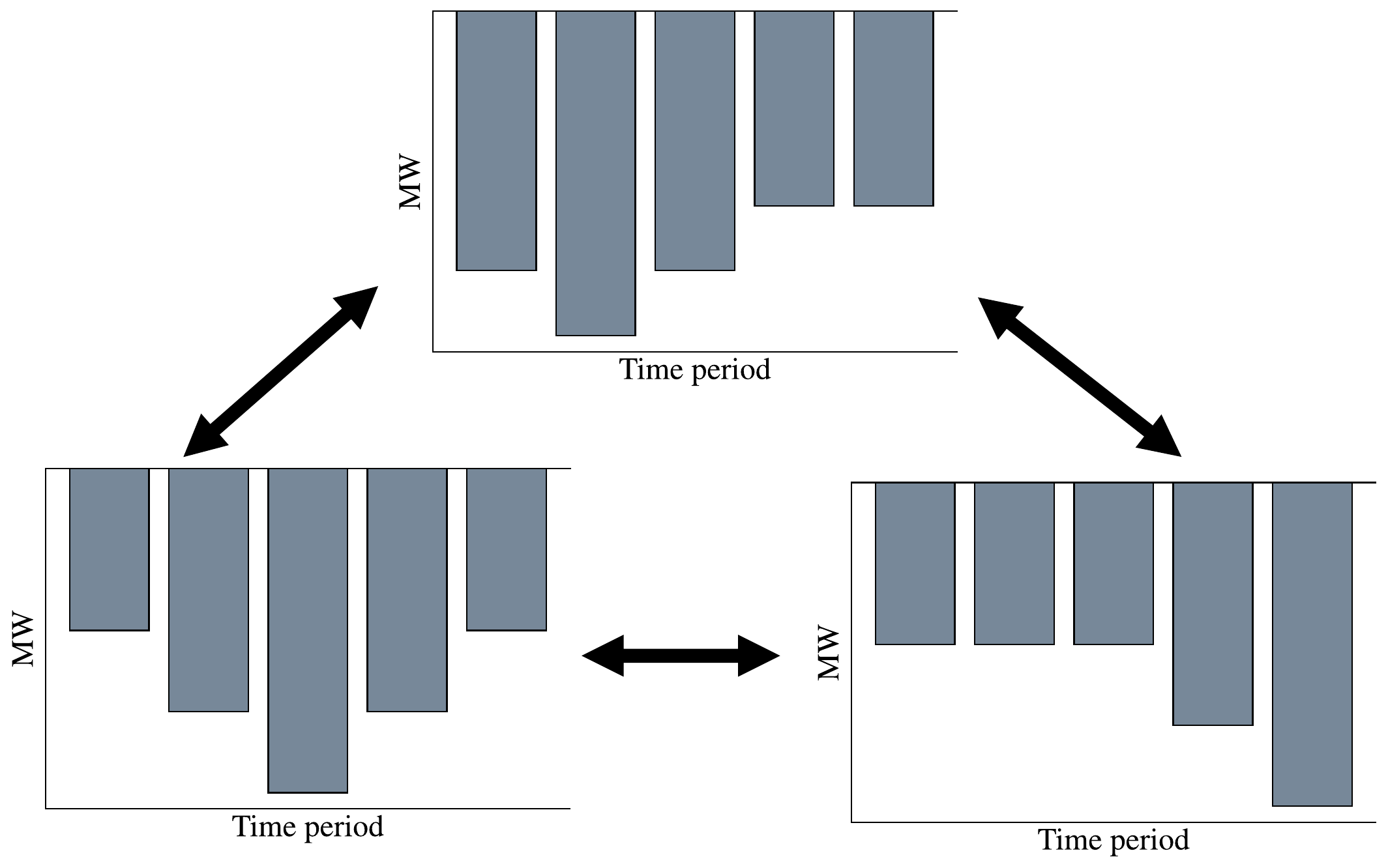}
    \end{subfigure}

\label{fig: xor bid illustration}
\end{figure}

The rationale for using XOR package bids in electricity auctions stems from the intertemporal constraints that many participants face. For example, a battery storage system needs to charge before it can discharge, a shiftable load may prefer to consume in period x or y, but not both, and a thermal generator may need to operate over multiple periods continuously. Participants can ensure that these constraints are met by bidding on entire power profiles instead of separate quantities for each period. Declaring these package bids as XOR bids - grouping multiple profiles into an exclusive set - allows participants to avoid selling more power than they can produce or buying more than they can consume, as only one profile is accepted at most.
XOR bids are capable of representing any type of preference~\citep{nisan2006bidding}, making them a versatile tool for agents on both the demand and supply sides, as well as for arbitrageurs (\Cref{fig: xor bid illustration}).

In combinatorial auctions, the auctioneer typically uses an optimization model to determine which bids to accept and the corresponding payments to agents. To ensure the computational tractability of this model, the auctioneer often limits the number of XOR package bids that can be submitted~\citep{goetzendorff2015compact}. This restriction creates a \textit{missing bids problem}, where agents cannot fully express their preferences - some packages they are interested in cannot be bid on. As a result, agents can miss potential profits, leading to a loss in society's welfare~\citep{bichler2023taming}.
This issue is also evident in the coupled European day-ahead electricity auction, which imposes a limit of 24 block bids per exclusive group~\citep{epexspot}. In this work, we explore the effects of this limit and aim to answer the following questions:
\begin{itemize}
\item How can participants \textbf{select package bids} that maximize their profit?\endnote{For simplicity, we use the term \textit{profit} to encompass all types of participants. For demand-side agents, this typically refers to minimizing procurement costs.}
\item What are the \textbf{welfare implications} of a limit on XOR bids?
\item Could \textbf{alternative bid formats} help solve the missing bid problem?
\end{itemize}

To address these questions, we must first examine the mechanisms employed by auctioneers to determine accepted bids and payments. Electricity auctions typically use a \textit{Walrasian mechanism}~\citep{milgrom_watt_2022}.\endnote{The widespread adoption of Walrasian mechanisms in electricity auctions is primarily driven by three factors: (i) the fairness and simplicity of uniform pricing, which ensures a single price per good for all agents~\citep{bichler2018matter}, (ii) its ability to serve as a clear scarcity signal, in contrast to nonlinear price signals that provide limited transparency~\citep{ahunbay2024pricing}, and (iii) its facilitation of financial arbitrage across forward markets, from years ahead to real-time, through derivatives~\citep{jha2023can}.} Although specific implementations may vary, these mechanisms aim to find uniform prices that clear the market. That is,
\begin{enumerate}[label=(\roman*)]
    \item Uniform prices (€/MW) are determined for electricity in each time period. Each agent pays the product of the accepted bid quantities and the uniform prices for each period.
    \item Only the most profitable XOR package bid is accepted; all are rejected if none are profitable.\endnote{For different bid formats, this rule may be stated differently. However, the main principle is that, at a Walrasian equilibrium, bids are accepted or rejected so that no agent would prefer a different outcome, given the equilibrium prices~\citep{mas1995microeconomic}.}
\end{enumerate}

If the auctioneer consistently identifies such a \textit{Walrasian equilibrium} and no agent has market power, which means that no agent can influence uniform prices, this mechanism is \textit{strategy-proof}~\citep{azevedo2019strategy}. In a strategy-proof mechanism, truthful bidding becomes dominant, as agents need not strategize based on others' bids. Furthermore, the allocation achieved under such conditions maximizes welfare in the auction~\citep{mas1995microeconomic}.

However, the two ideal conditions - (i) a perfectly competitive market and (ii) the existence of a Walrasian equilibrium - cannot be guaranteed. In the first case, this discrepancy requires the use of market power mitigation methods~\citep{adelowo2024redesigning}. For the second case, the Walrasian mechanisms employed can only approximate a Walrasian equilibrium, where not every agent is allocated their most profitable bid. While the European algorithm EUPHEMIA does not provide side payments to agents with such ``paradoxical'' outcomes~\citep{euphemia}, various forms of side payments are commonly incorporated into the mechanisms used by auctioneers in the United States~\citep{stevens2024some}.

In addressing the above three questions on XOR bids, we assume those two ideals are true: (i) Walrasian equilibria exist, and (ii) no agent has market power. 
While these assumptions are admittedly strong, discussions with industry stakeholders involved in European auctions - along with supporting evidence from~\cite{hubner2025approximate} and~\cite{graf2013measuring} - suggest that such assumptions are common in practice and approximate real-world conditions reasonably well. 
Notably, the study by \cite{karasavvidis2024optimal} - to the best of our knowledge the only one to address XOR bidding in electricity auctions - also adopts these assumptions.\endnote{Studies on XOR bid selection in other domains include \cite{scheffel2012impact}, which examines bounded rationality in spectrum auctions, or \cite{hammami2021exact}, which addresses the ``bid construction problem'' in transportation auctions. However, we did not find any direct transferable approach to electricity auctions. A key difference might be the high-quality price information available in daily electricity auctions. Furthermore, distinct auction mechanisms, such as pay-as-bid or multi-round formats, require specialized decision support strategies~\citep{pekevc2003combinatorial, adomavicius2022bidder}. Combinatorial auctions are typically domain specific and often require customized solutions~\citep{bichler2023taming}.
Most studies on bidding in electricity auctions focus on hourly bids~\citep[see the review by][]{aasgaard2019hydropower}, while block bids have received comparatively less attention~\citep[e.g.,][]{fleten2007stochastic,karasavvidis2021optimal}. To the best of our knowledge, \citet{karasavvidis2024optimal} is the only study that explicitly examines XOR bids in electricity auctions.} 
In their work, \cite{karasavvidis2024optimal} formulate the XOR bid selection problem as a \textit{stochastic bilevel optimization} model, where uniform prices are given as exogenous scenarios, and the lower level optimization reflects the bid acceptance rule under the assumed Walrasian equilibrium.\endnote{\cite{karasavvidis2024optimal} do not mention the possibility that this rule may not always hold, but instead implicitly assume equilibrium existence. Industry representatives note that paradoxically rejected or accepted bids are rarely an issue in the European day-ahead market. Indeed, in 2023, an average of 4,401 block bids were submitted per day, while only 7.5 were paradoxically rejected~\citep{cacm_report_2023}. A more comprehensive study on this phenomenon can be found in~\cite{hubner2025approximate}.}

Given these two idealized assumptions, it seems logical to submit as many XOR bids as possible to hedge against uncertain prices. However, \cite{karasavvidis2024optimal} use their method to determine only ten package bids without mentioning the upper limit of 24. Similarly, an analysis of XOR bids in the German \& Luxembourgian bidding zone in 2023 shows that many participants do not exploit this limit either~(\Cref{fig: number of bids}).\endnote{Perhaps unsurprisingly, a consultant from the energy industry shared that when they highlight the benefits of submitting the maximum number of bids, traders quickly see its value and wonder why they hadn’t thought of it sooner. The auction literature contains numerous examples where participants do not bid optimally due to a misunderstanding of auction mechanisms~\citep{milgrom2017discovering}. The concern that combinatorial auctions are too complex for agents to participate efficiently is as old as the idea of combinatorial auctions itself~\citep{rassenti1982combinatorial}.}

\begin{figure}[t]
\caption{Usage of exclusive groups (XOR bids) in 2023 in the Germany \& Luxembourg bidding zone.\protect\endnote{The historical bid data from the day-ahead auction was purchased from the power exchange EpexSpot.}}
\centering
    \begin{subfigure}[c]{0.45\textwidth} 
        \centering
        \caption{Demand bids.} 
        \label{subfig: demand number of bids}
        \includegraphics[height=5cm]{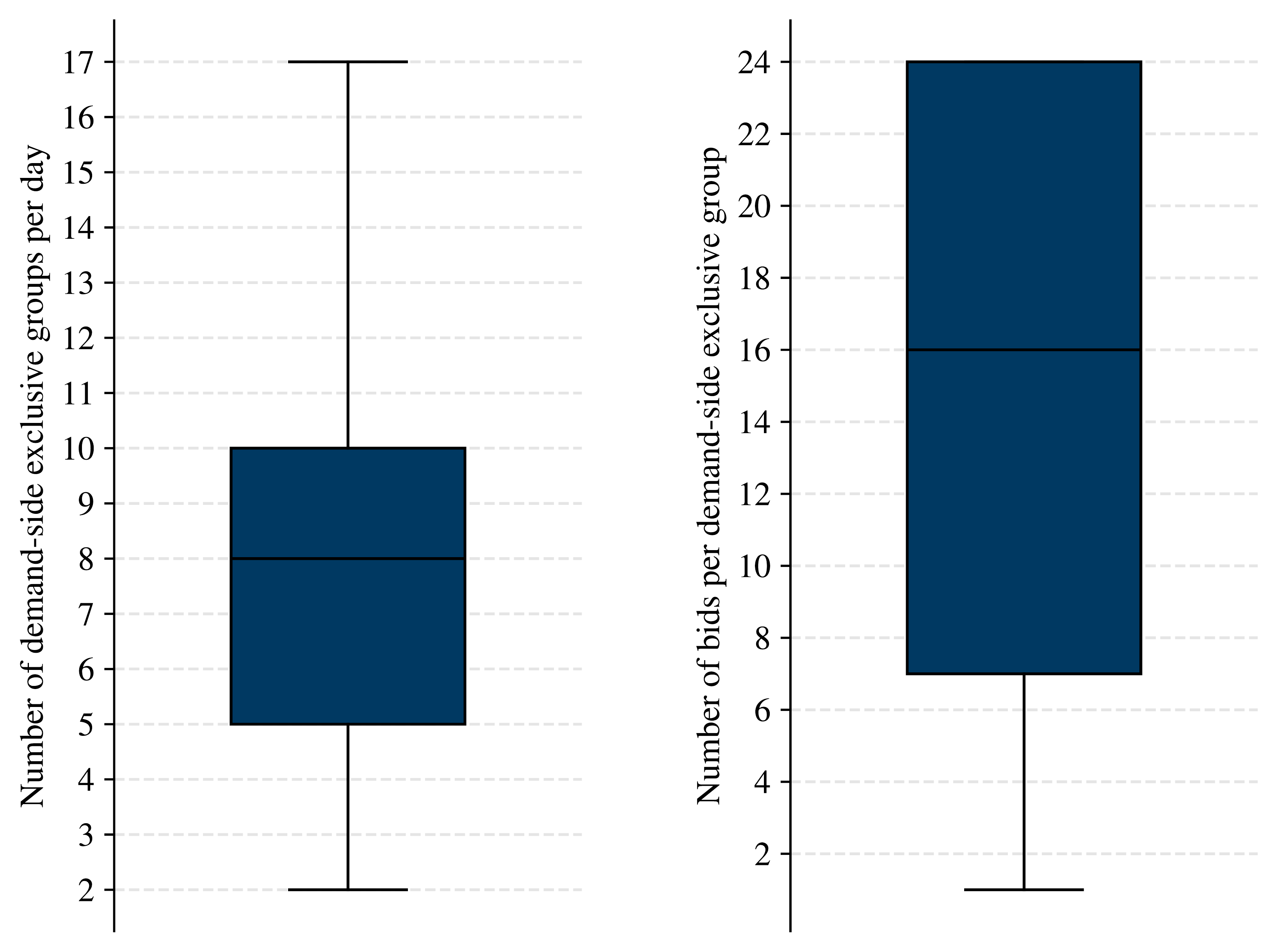}
    \end{subfigure}
    \hspace{1cm}
    \begin{subfigure}[c]{0.45\textwidth} 
        \centering
        \caption{Supply bids.}
        \label{subfig: supply number of bids}
        \includegraphics[height=5cm]{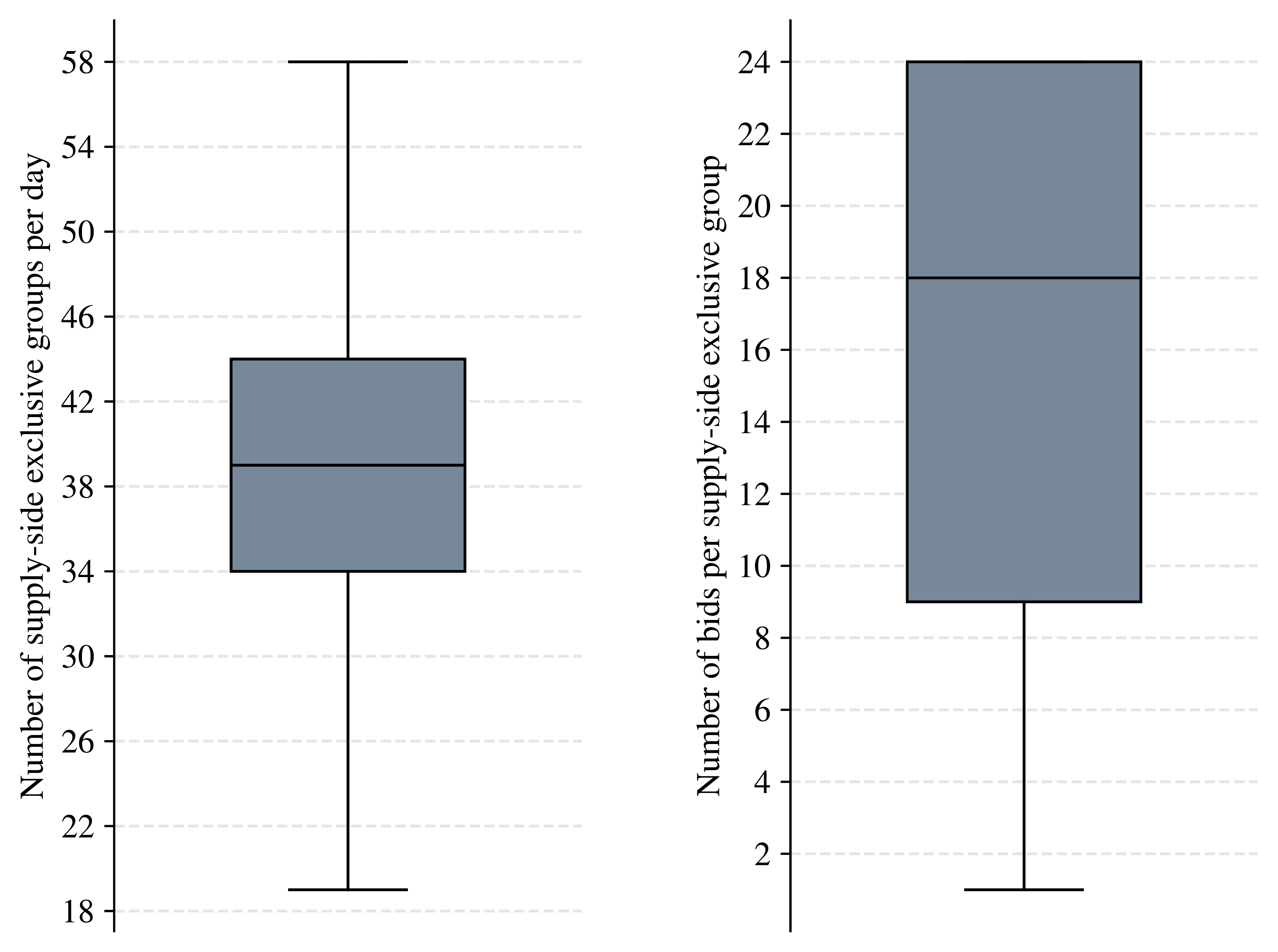}
    \end{subfigure}
\label{fig: number of bids}
\end{figure}

It is crucial to highlight that day-ahead and intraday auctions are \textit{forward} markets, requiring participants to account for \textit{opportunity costs} and risk preferences in their bidding strategies. Opportunity costs may arise naturally within the market or from design flaws, as exemplified by INC-DEC gaming strategies~\citep{graf2020simplified}.
In this work, we do not address the estimation of opportunity costs or the handling of risk preferences. Instead, we assume risk-neutral participants who aim to maximize the expected profit based on a generic cost or valuation function\endnote{A body of literature focuses on estimating opportunity costs, particularly in the context of hydropower; for further references, see \cite{lohndorf2023value}. In Appendix~\ref{app: conditional value at risk}, we study risk-averse participants maximizing the conditional value at risk (CVaR) .} 

That said, the forward nature of electricity auctions requires participants to make rapid decisions as updated weather forecasts continually refine information about future intraday and real-time electricity prices.\endnote{Moreover, there is a discussion about increasing the frequency of intraday auctions to enhance the overall efficiency of the European market, with a proposal to implement auctions every 15 minutes~\citep{graf2024frequent}.} This urgency demands that XOR bid selection methods be computationally efficient to avoid acting on outdated information. Despite this, \cite{karasavvidis2024optimal} formulated the bilevel model as a mixed-integer linear program (MILP), where binary variables grow linearly with the number of XOR bids and price scenarios. As a result, in one of their case studies, the model failed to terminate in an hour when determining only ten bids and considering just 20 price scenarios.

Our work introduces a linear programming (LP) approach for determining XOR bids. The \textit{polynomial time complexity} of our method is crucial for two reasons: First, increasing uncertainty in electricity prices requires considering more price scenarios for robustness. Second, as we will argue later, the historical limit of 24 bids is likely outdated, with the number of allowable package bids expected to rise.
We show that the stochastic bilevel model, maximizing expected profit under no market power and equilibrium assumptions, can be reformulated as an LP exploiting a \textit{total unimodular constraint matrix} when the participant has a finite set of packages. If not, a finite subset of bids can be pre-selected by some heuristics.

When agents bid according to this algorithm, lack market power, and the auction accepts their most profitable bids, we can derive an upper bound on the profit loss caused by the limit on the number of bids.
Moreover, assuming competitive equilibrium, we can extend this to bound the welfare loss for the entire auction.
In particular, our bound is based on the \textit{Wasserstein distance} between two probability measures: a discrete measure $\mathcal{Q}$ derived from the scenarios used for bid selection and a possibly continuous measure $\mathcal{P}$ representing the ``true'' price distribution. This relationship provides an interesting interpretation of the problem of bid selection and missing bids as a case of approximation of one probability measure with another~(\Cref{fig: probability distributions}) - a well-explored topic in stochastic programming~\citep{rujeerapaiboon2022scenario}.

\begin{figure}[t]
\caption{An illustration of distributions $\mathcal{P}$ and their discrete approximations $\mathcal{Q}$.}
\centering
    \begin{subfigure}[b]{0.4\textwidth} 
        \caption{High uncertainty.}
        \label{fig: probability_distribution_high}
        \includegraphics[width=\textwidth]{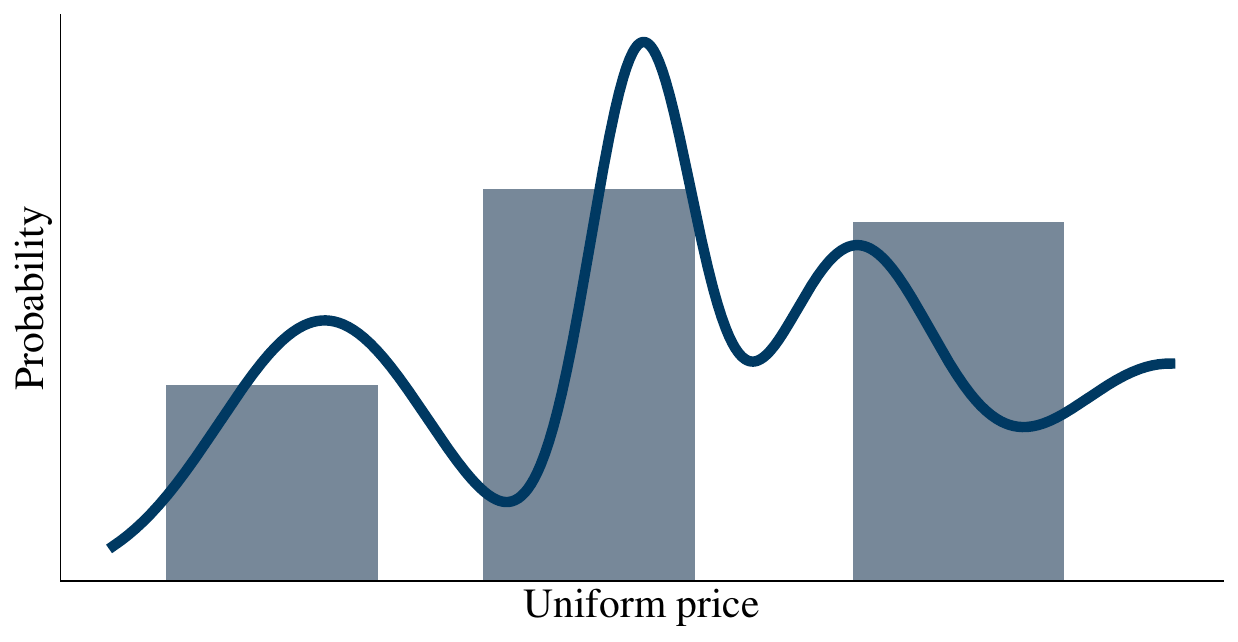}
    \end{subfigure}
    \hspace{0.05\textwidth}
    \begin{subfigure}[b]{0.4\textwidth} 
        \caption{Low uncertainty.} 
        \label{fig: probability_distribution_low}
        \includegraphics[width=\textwidth]{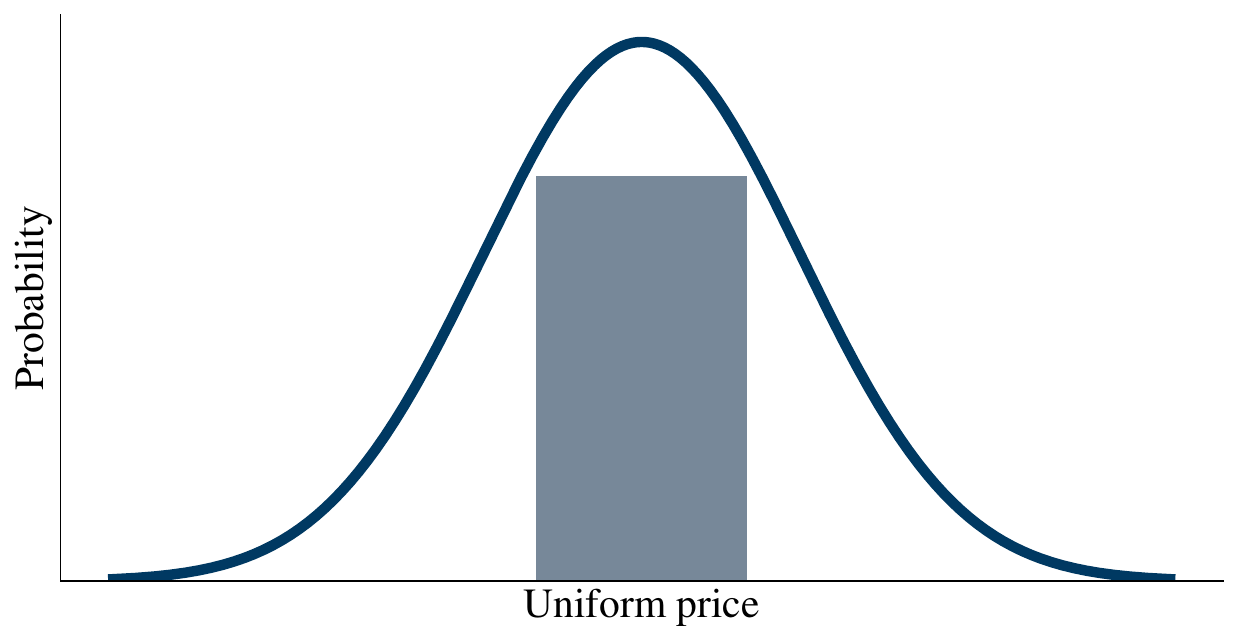}
    \end{subfigure}

{\footnotesize \textit{Notes:} When uncertainty is high, $\mathcal{P}$ is more dispersed, requiring $\mathcal{Q}$ to use a larger number of scenarios for a good approximation. In contrast, when uncertainty is low, $\mathcal{P}$ is concentrated around a single mode, allowing $\mathcal{Q}$ to approximate it with fewer scenarios.}
\label{fig: probability distributions}
\end{figure}

This theorem captures a key concept in \textit{microeconomic theory}: In markets where equilibrium prices are known or can be easily determined through decentralized bargaining, no auctioneer is needed, as prices alone can guide allocations~\citep{mas1995microeconomic}. In such cases, a single ``bid'' - indicating how much an agent is willing to produce or consume - is sufficient. However, when prices are unknown or difficult to determine, the auctioneer requires as much information as possible about agents' preferences to guide efficient allocations~\citep{milgrom2017discovering}.
Electricity markets lie somewhere in between these extremes. Although electricity prices are not fully known in advance, they usually follow predictable patterns~\citep{lago2021forecasting}. 
Therefore, although a single self-schedule based on forecasted prices might lead to inefficient or infeasible allocations, providing the auctioneer with detailed preferences might also be unnecessary if good price signals are available.

In general, equilibrium prices cannot be guaranteed in nonconvex markets, though near-equilibrium prices can often be approximated (see, e.g., \cite{starr1969quasi}, \cite{bichler2018matter}, or \cite{hubner2025approximate}). However, unlike electricity auctions, agents in most other auctions typically lack prior knowledge of these prices. Consequently, studies addressing the missing bids problem - such as \cite{goetzendorff2015compact} for TV ads and \cite{bichler2023taming} for spectrum auctions - advocate for \textit{parametric bids}, which work as follows: The agent submits parameters $\theta_1, \ldots, \theta_n$, which the auctioneer uses to derive a set $\mathcal{M}(\theta_1, \ldots, \theta_n)$ of packages the agent is interested in, along with a function $f(\theta_1, \ldots, \theta_n): \mathcal{M}(\theta_1, \ldots, \theta_n) \to \mathbb{R}$ that assigns bid prices to these packages, eliminating the need to enumerate all possible packages. These bids, commonly used for thermal plants and battery storage in US power systems, are known as ``multi-part bids'' and typically include parameters such as start-up costs or storage capacity~\citep{herrero2020evolving}. A more direct approach would allow agents to submit sets $\mathcal{M}$ and functions $f$ directly, bypassing intermediate parameters. However, the auctioneer must restrict the complexity of $f$ and $\mathcal{M}$ to ensure tractable optimization, making this approach suitable only for agents with sufficiently simple preferences.\endnote{The proposal by \cite{bobo2021price} suggests allowing agents to bid ``state variables'' and matrices defining a polyhedral set $\mathcal{M}$ and a convex piecewise-linear function $f$. This effectively allows any LP-representable $\mathcal{M}$ and $f$. To handle nonconvexities, the framework could extend to MILP-representable $\mathcal{M}$ and $f$.}

The economics of most thermal power plants and battery storage systems can typically be captured through computationally feasible parametric or functional bids~\citep{herrero2020evolving}. However, this may not apply to all resources, particularly emerging demand-side agents aggregating decentralized energy sources.\endnote{Electrical energy is a complex commodity. Consequently, the systems that produce it, such as large thermal or hydropower plants, are intricate techno-economic systems that cannot always be easily described by a simple set of parameters or functional bid. On the demand side, human behavior further complicates matters, and aggregators such as utilities must account for the dynamics of their entire retail system when submitting buy bids in the wholesale market. 
The potential difficulty of parametric bids in accurately capturing preferences might be one reason why a large number of agents (about 40\%, according to \cite{byers2022auction}, for PJM in 2018) choose to self-schedule instead of using the available parametric bid formats.}
Another challenge is the willingness to participate. Complex bidding procedures often discourage participation, either due to a lack of understanding or a reluctance to share detailed operational data~\citep{milgrom2017discovering}.
XOR bids offer a simpler alternative. They are easy to understand, require no operational data, and are applicable across all technologies. Their simplicity and versatility make them an ideal complement for agents for whom parametric and functional bids are unsuitable.

This article aims to help auction designers and agents using XOR bids understand and reduce the welfare loss caused by missing bids. To achieve this, \Cref{sec: bid selection} introduces our bid selection algorithm. \Cref{sec: welfare loss} presents our Wasserstein theorem. In \Cref{sec: simulations}, we apply the algorithm to simulate the bidding behavior of a generator, battery storage, and a utility serving load in the German bidding zone during 2023. \Cref{sec: disucsions} demonstrates that package bids are not necessarily nonconvex, discusses aggregator and portfolio bidding, and briefly debates other types of package bids. Finally, \Cref{sec: conclusion} concludes the article.

\section{Optimal Package Bid Selection}\label{sec: bid selection}

Consider an auction with $T$ periods, indexed by $t \in \mathcal{T} = \{1, \ldots, T\}$, and a participant with a valuation function $v: \mathbb{R}^T \to \mathbb{R} \cup \{-\infty\}$. The function $v(x)$ quantifies the agent's cost $\big(v(x) < 0\big)$ or utility $\big(v(x) > 0\big)$ associated with the consumption or supply of power profile $x \in \mathbb{R}^T$. If $v(x) = -\infty$, the profile $x$ is infeasible for the agent, possibly due to technical constraints.

The agent may submit $B$ package bids in an exclusive group, where each bid is defined by $(x_b, p_b)$ for $b \in \mathcal{B} = \{1, \ldots, B\}$. Here, $x_b \in \mathbb{R}^T$ is a power profile of interest, and $p_b \in \mathbb{R}$ is the price the agent is willing to pay (or expects to receive) for $x_b$. The value $x_{bt}\in\mathbb{R}$ denotes the power consumed from the grid ($x_{bt} > 0$) or supplied to the grid ($x_{bt} < 0$) during period $t$, typically measured in MW. Bid prices $p_b$ are expressed in euros (€) in European day-ahead and intraday auctions.

\subsection{Stochastic Bilevel Program}\label{subsec: stochastic bilevel program}

The agent determines XOR bids under the assumptions of no market power, no perfect information, and the expectation of an equilibrium outcome. This implies that the agent forecasts price scenarios $\lambda_s \in \mathbb{R}^T, s \in \mathcal{S} = \{1, \ldots, S\}$ with probabilities $\pi_s$, and assumes the auctioneer selects the most profitable bid in each scenario:
\begin{align} \label{eq: most profitable bid}
    & (x_s^\ast,p_s^\ast) \in \argmax \big\{p_b - \langle \lambda_s, x_b \rangle \ \big| \ (x_b, p_b) \in \nonumber \\
    & \qquad \{ (\mathbf{0},0), (x_1,p_1), \ldots, (x_B,p_B)\} \big\} \quad \forall s\in\mathcal{S} .
\end{align}
Here, $\langle \cdot, \cdot \rangle$ denotes the scalar product. Specifically, $\langle \lambda_s, x_b \rangle = \sum_{t \in \mathcal{T}} \lambda_{st} \cdot x_{bt}$ represents the payment the agent receives (or pays) if bid $(x_b, p_b)$ is accepted under the uniform pricing rule.
The XOR condition ensures that the auctioneer accepts at most one bid from the group, or none at all - defaulting to the zero bid $(\mathbf{0}, 0)$, where $\mathbf{0} = (0, \ldots, 0)$ if all bids are unprofitable.

We assume a risk neutral agent aiming to determine $B$ package bids $(x_b, p_b), b \in \mathcal{B}$, which maximize the expected profit. To do so, the agent seeks to solve:\endnote{We assume the existence of an optimum, and therefore use $\max$ instead of $\sup$. However, it is important to note that this assumption does not necessarily hold, as no additional properties of $v$ have been specified.}
\begin{align}\label{eq: stochastic bilevel program}
    \max_{x_b,p_b} & \quad \sum_{s\in\mathcal{S}} \pi_s \cdot \big( v(x_s^\ast) - \langle \lambda_s, x_s^\ast \rangle \big) \nonumber \\
    \text{s.t.} & \quad \eqref{eq: most profitable bid} \;\;\; \text{and} \;\; (x_b,p_b)\in\mathbb{R}^{T+1} .
\end{align}
This is a stochastic bilevel program, with the lower level defined by \eqref{eq: most profitable bid}. It represents the ``optimistic'' version, where the upper level optimizes over the lower-level outcome $x_s^\ast$~\citep{kleinert2021survey}. In reality, if there are multiple most profitable bids, the auctioneer, not the agent, makes the selection. 
The MILP formulation in \cite{karasavvidis2024optimal} can be derived from~\eqref{eq: stochastic bilevel program} using standard bilevel reformulation techniques~\citep{kleinert2021survey}. In the following, we show that this bilevel program can be reformulated as an LP.

\subsection{Truthful Bidding}\label{subsec: truthful bidding}

As discussed in the Introduction, a fundamental result in economic theory asserts that a mechanism leading to a Walrasian equilibrium guarantees truthful bidding as a dominant strategy, provided agents cannot influence the uniform price~\citep{mas1995microeconomic, azevedo2019strategy}. We show that this strategy-proofness extends to our setting, where agents cannot bid their valuation $v$ directly but must choose from a set of XOR package bids. To demonstrate this, we first analyze how program~\eqref{eq: stochastic bilevel program} would be affected by truthful bidding and provide proof that this is indeed a dominant strategy at the end of the section.

Assuming the agent bids truthfully, which means that they select a bid price equal to their valuation, that is, $p_b = v(x_b)$ for all $b \in \mathcal{B}$, we can eliminate $p_b$ as a decision variable from problem~\eqref{eq: stochastic bilevel program}. Consequently, the problem can be reformulated as:
\begin{align}\label{eq: stochastic bilevel program I}
    \max_{x_b\in\mathbb{R}^{T}} & \quad \sum_{s\in\mathcal{S}} \pi_s \cdot \big( v(x_s^\ast) - \langle \lambda_s, x_s^\ast \rangle \big) \nonumber \\
    \text{s.t.} & \quad x_s^\ast \in \argmax \big\{v(x_b) - \langle \lambda_s, x_b \rangle \ \big| \ \nonumber \\
    & \qquad x_b \in \{\mathbf{0}, x_1, \ldots, x_B\} \big\} \quad \forall s\in\mathcal{S} .
\end{align}
The objective function of the lower level now aligns with that of the upper level. As a result, the lower-level constraint can be omitted since the upper-level optimization will implicitly ensure the appropriate selection of $x_s^\ast$~\citep{kleinert2021survey}:
\begin{align}\label{eq: stochastic bilevel program II}
    \max_{x_b\in\mathbb{R}^{T}} & \quad \sum_{s\in\mathcal{S}} \pi_s \cdot \big( v(x_s^\ast) - \langle \lambda_s, x_s^\ast \rangle \big) \nonumber \\
    \text{s.t.} & \quad x_s^\ast \in \{\mathbf{0}, x_1, \ldots, x_B\} \quad \forall s\in\mathcal{S} .
\end{align}
With these considerations in place, we can now establish that truthful bidding, i.e., $p_b = v(x_b)$ for all $b \in \mathcal{B}$, is indeed a dominant strategy. As a consequence of this proposition, the agent can solve the single-level problem~\eqref{eq: stochastic bilevel program II} instead of the bilevel problem~\eqref{eq: stochastic bilevel program} without any loss of expected profit.

\begin{proposition} \label{proposition: truthful bidding}
Let $v^1$ and $v^2$ be the optimal values of program~\eqref{eq: stochastic bilevel program} without and with the constraints $p_b = v(x_b)$ for all $b \in \mathcal{B}$, respectively. Then, $v^2 \geq v^1$.
\end{proposition}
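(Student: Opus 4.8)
The plan is to show that any feasible point of the unconstrained program \eqref{eq: stochastic bilevel program} can be transformed into a feasible point of the truthful-bidding-constrained program that achieves at least as high an objective value. Since $v^1$ is attained (by the standing assumption that an optimum exists), fix an optimal solution $(x_b, p_b)_{b\in\mathcal{B}}$ of \eqref{eq: stochastic bilevel program} with value $v^1$, and let $(x_s^\ast, p_s^\ast)$ denote a corresponding lower-level selection for each scenario $s$. I would then define a new candidate by keeping the same profiles $x_b$ but replacing each bid price with the truthful one, $\tilde p_b := v(x_b)$, and leaving the zero bid as $(\mathbf 0, 0)$.

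First I would argue that each $x_s^\ast$ remains a valid lower-level selection for the modified bid set, or more precisely that the modified objective is at least $v^1$. The key observation is a pointwise inequality: for every scenario $s$ and every $b\in\mathcal{B}$, the original optimality of $(x_s^\ast,p_s^\ast)$ in \eqref{eq: most profitable bid} gives $p_s^\ast - \langle\lambda_s,x_s^\ast\rangle \geq p_b - \langle\lambda_s,x_b\rangle$, and in particular (taking the zero bid) $p_s^\ast - \langle\lambda_s,x_s^\ast\rangle \geq 0$. Now observe that for the accepted profile $x_s^\ast$, its \emph{original} bid price satisfied $p_s^\ast \le v(x_s^\ast)$ is \emph{not} automatic — so the real content is elsewhere. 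Instead, I would reason directly about the value: the upper-level objective contribution in scenario $s$ is $v(x_s^\ast) - \langle\lambda_s, x_s^\ast\rangle$, which does not depend on the bid prices at all. What the prices control is only \emph{which} profile the lower level selects. So it suffices to show that, under truthful prices, for each $s$ the lower level can be made to select a profile whose contribution $v(\cdot) - \langle\lambda_s,\cdot\rangle$ is at least $v(x_s^\ast) - \langle\lambda_s, x_s^\ast\rangle$.

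To see this, let $x_s^{\ast\ast} \in \argmax\{v(x_b) - \langle\lambda_s,x_b\rangle \mid x_b \in \{\mathbf 0, x_1,\ldots,x_B\}\}$ be an optimal truthful lower-level response in scenario $s$. By definition of the argmax and since $x_s^\ast$ is among the candidates, $v(x_s^{\ast\ast}) - \langle\lambda_s, x_s^{\ast\ast}\rangle \geq v(x_s^\ast) - \langle\lambda_s, x_s^\ast\rangle$. Hence in the optimistic bilevel program \eqref{eq: stochastic bilevel program II} — equivalently, in \eqref{eq: stochastic bilevel program} restricted to truthful prices — choosing these profiles $x_b$ yields objective $\sum_s \pi_s\big(v(x_s^{\ast\ast}) - \langle\lambda_s,x_s^{\ast\ast}\rangle\big) \geq \sum_s \pi_s\big(v(x_s^\ast) - \langle\lambda_s,x_s^\ast\rangle\big) = v^1$. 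Since this is a feasible value for the truthful-constrained program, $v^2 \geq v^1$.

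The main subtlety — and the step I would be most careful about — is the reduction from \eqref{eq: stochastic bilevel program} with truthful constraints to the single-level form \eqref{eq: stochastic bilevel program II}, i.e.\ justifying that in the optimistic formulation the lower-level argmax constraint is redundant once the two objectives coincide. The argument is that the upper level, maximizing $\sum_s \pi_s(v(x_s^\ast)-\langle\lambda_s,x_s^\ast\rangle)$ over $x_s^\ast \in \{\mathbf 0, x_1,\ldots,x_B\}$, will on its own pick the maximizer of $v(\cdot)-\langle\lambda_s,\cdot\rangle$ in each scenario, which is exactly what the omitted lower-level constraint demands; this is the standard observation for optimistic bilevel programs with aligned objectives, already invoked in the text via \citep{kleinert2021survey}, so I would cite it rather than belabor it. One should also note the mild point that if multiple profiles tie in the lower level, the optimistic convention lets the upper level break ties favorably, which is all that is needed here and creates no difficulty. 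No convexity or regularity of $v$ is used anywhere, consistent with the paper's setup.
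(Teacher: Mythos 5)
Your proof is correct and follows essentially the same route as the paper's: both arguments rest on the observation that the feasible choices for $x_s^\ast$ under truthful prices (where the lower-level objective collapses into the upper-level one, as in~\eqref{eq: stochastic bilevel program II}) contain the lower-level selections of any solution to~\eqref{eq: stochastic bilevel program}, so the truthful value can only be at least $v^1$. Your version merely makes this explicit by constructing the improving truthful solution from a fixed optimizer, whereas the paper states it abstractly as a relaxation via the intermediate value $v^3$.
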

\begin{proof}
Let $v^3$ be the optimal value of program~\eqref{eq: stochastic bilevel program II}. As shown above, $v^3 = v^2$. Additionally, the feasible region for $x_s^\ast$ is larger in program~\eqref{eq: stochastic bilevel program II} than in~\eqref{eq: stochastic bilevel program}, where the lower-level constraint limits it. Thus, $v^3 \geq v^1$, thereby concluding the proof. 
\end{proof}

\subsection{More Bids than Scenarios} \label{subsec: heuristic}

Assume the agent is permitted to place more XOR package bids than the number of predicted price scenarios, that is, $B \ge S$. In this case, no optimization is necessary, as a solution to~\eqref{eq: stochastic bilevel program II}, and consequently to~\eqref{eq: stochastic bilevel program}, can be obtained simply by executing the following algorithm:
\begin{align}\label{eq: heuristic I}
    &\begin{array}{l}
    \text{(i) Forecast $S$ price scenarios $\lambda_s$.} \\ 
    \text{(ii) For each scenario, determine an optimal} \\ 
    \qquad \text{profile $\bar{x}_s \; \in \ \argmax \; v(x) - \langle \lambda_s, x \rangle .$} \\
    \text{(iii) Submit the package bids} \\
    \qquad\; \text{$\big \{ \big(\bar{x}_s,v(\bar{x}_s)\big) \ \big | \ s=1,\ldots,S \big\}$.}
    \end{array}
\end{align}
Our simulations in~\Cref{sec: simulations} show that using many price scenarios is beneficial in bid selection only when they are high quality and add meaningful information. For example, if only 20 out of 100 scenarios are reliable, focusing on these 20 is more effective than taking all 100 into account. In such cases,~\eqref{eq: stochastic bilevel program} can be efficiently solved using~\eqref{eq: heuristic I}.
Next, we consider another case where solving~\eqref{eq: stochastic bilevel program} is tractable. 

\subsection{Finitely Many Packages} \label{subsec: LP reformulation}

Assume the agent is interested in only a finite number of packages. That is, there exists a finite set $\Sigma = \{\bar{x}_1, \ldots, \bar{x}_K\}$ of profiles for which $v(x) \neq -\infty$, and $v(x) = -\infty$ for all $x \in \mathbb{R}^T \setminus \Sigma$. In this case, problem~\eqref{eq: stochastic bilevel program II} is given by:
\begin{align}\label{eq: stochastic bilevel program III}
    \max_{x_b} & \quad \sum_{s\in\mathcal{S}} \pi_s \cdot \big( v(x_s^\ast) - \langle \lambda_s, x_s^\ast \rangle \big) \nonumber \\
    \text{s.t.} & \quad x_s^\ast \in \big\{\mathbf{0}, x_1, \ldots, x_B \big\} \quad \forall s\in\mathcal{S} \nonumber \\
    & \quad x_b \in \big\{\bar{x}_1,\ldots,\bar{x}_k, \ldots, \bar{x}_K\big\} \quad \forall b\in\mathcal{B} .
\end{align}
The logic behind this program is as follows: First, the agent selects $B$ packages $x_b$ from the set $\{\bar{x}_1, \ldots, \bar{x}_K\}$. Second, the agent chooses $S$ packages $x_s^\ast$ from the selected packages $\{\mathbf{0}, x_1, \ldots, x_B\}$ to represent the most profitable package in scenario $s$.

This logic can be captured using two types of binary variables. Let $\delta_k \in \{0, 1\}$ denote whether the agent selects package $\bar{x}_k$ as one of its XOR bids, and let $\gamma_{ks} \in \{0, 1\}$ denote whether package $\bar{x}_k$ is chosen as the most profitable package $x_s^\ast$. Importantly, $\gamma_{ks}$ can only be 1 if $\bar{x}_k$ is selected as one of the XOR bids, that is, $\gamma_{ks} \leq \delta_k$.

With these definitions, we can formulate~\eqref{eq: stochastic bilevel program III} as the following binary program:
\begin{subequations}\label{eq: binary program}
\begin{align} 
\max_{\delta, \gamma} & \quad \sum_{s\in\mathcal{S}} \pi_s \cdot \big( v(x_s^\ast) - \langle \lambda_s, x_s^\ast \rangle \big) \label{eq: binary program 0} \\
\text{s.t.} & \quad x_s^\ast = \sum_{k\in\mathcal{K}} \gamma_{ks} \cdot \bar{x}_k \quad \forall s \in \mathcal{S} \label{eq: binary program 1} \\
& \quad \sum_{k \in \mathcal{K}} \gamma_{ks} \le 1 \quad \forall s \in \mathcal{S} \label{eq: binary program 2} \\ 
& \quad \gamma_{ks} \le \delta_{k} \quad \forall s \in \mathcal{S}, \: k\in\mathcal{K} \label{eq: binary program 3} \\
&  \quad \sum_{k \in \mathcal{K}} \delta_k = B \label{eq: binary program 4} \\
& \quad \delta_{k},\gamma_{ks}\in\{0,1\} \quad \forall s \in \mathcal{S}, \: k\in\mathcal{K} \label{eq: binary program 5} .
\end{align}
\end{subequations}
Here, $\mathcal{K} = \{1, \ldots, K\}$.
Constraints~\eqref{eq: binary program 2}–\eqref{eq: binary program 4} implement the selection logic of program~\eqref{eq: stochastic bilevel program III}. Specifically, \eqref{eq: binary program 4} ensures that the agent selects exactly $B$ packages to bid on, \eqref{eq: binary program 3} guarantees that only packages chosen by the agent can be selected as the most profitable, and \eqref{eq: binary program 2} ensures that at most one package is selected as the most profitable in each scenario.

To simplify this program further, we can eliminate $x_s^\ast$ by substituting~\eqref{eq: binary program 1} into the objective function. Doing this, we obtain $v(\sum_{k \in \mathcal{K}} \gamma_{ks} \cdot \bar{x}_k)$. Since at most one $\gamma_{ks}$ is 1 and the others are 0, we can draw the sum out of the function, resulting in $\sum_{k \in \mathcal{K}} \gamma_{ks} \cdot v(\bar{x}_k)$. As $\bar{x}_k$ is a parameter of the optimization model, we can define $\bar{v}_k = v(\bar{x}_k)$ as well as a parameter. Thus, we can formulate~\eqref{eq: binary program} as:
\begin{align}\label{eq: binary program I}
\max_{\delta, \gamma} & \quad \sum_{s \in \mathcal{S}} \pi_s \cdot \sum_{k \in \mathcal{K}} \big( \bar{v}_k - \langle \lambda_s, \bar{x}_{k} \rangle \big) \cdot \gamma_{ks} \nonumber \\
\text{s.t.} & \quad \eqref{eq: binary program 2} - \eqref{eq: binary program 5} .
\end{align}
By examining the constraint matrix formed by constraints \eqref{eq: binary program 2}–\eqref{eq: binary program 5}, it becomes clear that the binary restrictions are redundant. Therefore, the binary variables can be relaxed to continuous variables $\delta_{k}, \gamma_{ks} \in [0, 1]$ without sacrificing optimality.

\begin{proposition} \label{proposition: lp relaxation}
    Program~\eqref{eq: binary program I} and its LP-relaxation are equivalent in the sense that their optimal values coincide, and there is a point $(\delta_{k}^\ast,\gamma_{ks}^\ast)_{s \in \mathcal{S}, k\in\mathcal{K}}$ which is an optimal solution to both.
\end{proposition}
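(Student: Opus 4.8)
The obvious route is to treat \eqref{eq: binary program I} as an integer program whose LP relaxation happens to have an integral optimum, and to establish this through total unimodularity. The plan would be: (i) write the coefficient matrix $M$ of the relaxed feasible set --- rows the inequalities \eqref{eq: binary program 2} and \eqref{eq: binary program 3} and the equality \eqref{eq: binary program 4}, columns indexed by the $\delta_k$ and the $\gamma_{ks}$ --- and note that its right-hand side and the box bounds $[0,1]$ from \eqref{eq: binary program 5} are integral; (ii) prove $M$ is totally unimodular, e.g.\ via the Ghouila--Houri row criterion, two-colouring any chosen set of rows so that every column's signed sum stays in $\{-1,0,1\}$ (a $\gamma_{ks}$-column only forces the \eqref{eq: binary program 2}-row of scenario $s$ and the \eqref{eq: binary program 3}-row of the pair $(k,s)$ to receive opposite colours, while a $\delta_k$-column forces the \eqref{eq: binary program 3}-rows sharing index $k$, together with the single \eqref{eq: binary program 4}-row, to be colour-balanced); (iii) invoke Hoffman--Kruskal: $M$ totally unimodular with integral data makes every vertex of the relaxed polytope integral, so an optimal basic LP solution is feasible for \eqref{eq: binary program I}, which yields both the equality of optimal values and the common optimal point asserted in \Cref{proposition: lp relaxation}.

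The crux, and where I expect the argument to fail, is step (ii). The system \eqref{eq: binary program 2}--\eqref{eq: binary program 4} is exactly the LP relaxation of a $B$-median (equivalently, uncapacitated facility-location) problem: packages are facilities, scenarios are clients, $\delta_k$ is ``facility $k$ is open'', $\gamma_{ks}\le\delta_k$ is the service linkage, and $\sum_k\delta_k=B$ is ``open exactly $B$ facilities'' --- and such matrices are well known not to be totally unimodular, and the relaxation has a genuine integrality gap. A concrete instance of \eqref{eq: binary program I} exhibiting it: $K=4$ packages, $B=2$, and $S=6$ scenarios identified with the edges of the complete graph on the four packages, with $\pi_s$, $v$, the $\bar{x}_k$ and the $\lambda_s$ chosen so that the objective coefficient of $\gamma_{ks}$ in \eqref{eq: binary program I} equals $1$ when package $k$ is an endpoint of edge $s$ and $0$ otherwise (realisable by a suitable choice of these data, e.g.\ with $\bar{x}_k$ the $k$-th unit vector in $\mathbb{R}^4$). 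The fractional point $\delta_k=\tfrac12$ for all $k$, and $\gamma_{ks}=\tfrac12$ on the two endpoints of $s$ (zero otherwise), is feasible with objective value $6$, whereas any two chosen packages leave one edge uncovered, so the best integral solution attains only $5$. Hence the relaxed polytope has a strictly optimal fractional vertex, $M$ is not totally unimodular, and for some row set the Ghouila--Houri colouring in step (ii) does not exist.

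I would therefore not expect \Cref{proposition: lp relaxation} to hold in the stated generality; exactness of the relaxation needs an extra hypothesis. The cleanest is $B\ge S$: the per-scenario contribution is then at most $\max_k\big(0,\ \bar{v}_k-\langle\lambda_s,\bar{x}_k\rangle\big)$, and the heuristic \eqref{eq: heuristic I} already attains this bound with an integral solution, so the two optima coincide. Another is to restrict the attainable coefficient matrix $\big(\bar{v}_k-\langle\lambda_s,\bar{x}_k\rangle\big)_{k,s}$ to an interval/consecutive-ones pattern across packages, under which $M$ is a network matrix and hence totally unimodular. Absent such a restriction, the defensible claim is a bound on the integrality gap --- or simply keeping \eqref{eq: binary program I} as a MILP --- rather than exact LP equivalence.
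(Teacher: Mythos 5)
Your analysis is correct, and the difficulty you isolate is exactly where the paper's own argument breaks down. The published proof of \Cref{proposition: lp relaxation} asserts that the matrix of \eqref{eq: binary program 2}--\eqref{eq: binary program 4} is totally unimodular ``by partitioning the rows'', invoking the criterion for $(0,\pm1)$ matrices with at most two nonzero entries per column. That criterion does not apply here: each $\gamma_{ks}$-column has two nonzeros, but each $\delta_k$-column has $S+1$ of them (a $-1$ in every row $\gamma_{ks}\le\delta_k$, $s\in\mathcal{S}$, plus a $+1$ in the cardinality row \eqref{eq: binary program 4}). And whichever total-unimodularity criterion is intended, your facility-location reading shows the claim cannot be repaired: \eqref{eq: binary program 2}--\eqref{eq: binary program 4} is precisely the max-$B$-coverage relaxation, whose integrality gap is classical.

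Your $K_4$ instance is a valid counterexample and is realizable within the model: take $T=K=4$, $\bar{x}_k=e_k$, $v(\bar{x}_k)=1$ (and $v=-\infty$ elsewhere), $\lambda_{sk}=1-\mathbf{1}[k\in s]$ for each edge-scenario $s$, $\pi_s=1/6$, $B=2$. Then $U_{ks}=\bar v_k-\langle\lambda_s,\bar x_k\rangle$ reproduces the vertex--edge incidence pattern, the fractional point with $\delta_k\equiv\tfrac12$ attains value $1$, every integral solution attains $5/6$, and since the LP optimum is attained at a vertex of the bounded polytope, that vertex must be fractional. Hence the optimal values differ and no common optimal point exists, so \Cref{proposition: lp relaxation} is false as stated; the downstream claims about polynomial-time solvability and the tractability of the bilevel-knapsack variant in Appendix~B fall with it, although the heuristic \eqref{eq: heuristic II} and the simulations are unaffected since they never relied on exactness. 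Your proposed repairs are the right ones: for $B\ge S$ the issue is vacuous because \eqref{eq: heuristic I} already produces an integral optimum (consistent with \Cref{subsec: heuristic}), for $S=1$ the relaxation is exact, and otherwise one must either keep \eqref{eq: binary program I} as a MILP, impose structural hypotheses on $(U_{ks})$ that make the constraint matrix a network matrix, or settle for an approximation guarantee such as the $1-(1-1/B)^B$ greedy bound for max coverage.
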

\begin{proof}
It is straightforward to verify that the matrix formed by constraints~\eqref{eq: binary program 2}~-~\eqref{eq: binary program 4} is totally unimodular by partitioning the rows~\citep[cf. Theorem 2.7 in part III.1 of][]{nemhauserwolsey1988}.
Moreover, since adding identity matrices does not cancel total unimodularity, the polytope formed by~\eqref{eq: binary program 2}~-~\eqref{eq: binary program 4} and $\delta_k, \gamma_{ks}\in[0,1] \ \forall s \in \mathcal{S}, \: k\in\mathcal{K}$ is integral~\citep[cf. Proposition 2.1 and Theorem 2.5 in part III.1 of][]{nemhauserwolsey1988}. 
Consequently, there is an integral vertex $(\delta_k^\ast, \gamma_{ks}^\ast)_{s \in \mathcal{S}, k\in\mathcal{K}}$, which solves the LP relaxation and thus also \eqref{eq: binary program I}. 
\end{proof}

In Appendix~\ref{app: bilevel knapsack}, we highlight that our package bid selection problem~\eqref{eq: binary program I} is a special variant of the bilevel knapsack problem.
Interestingly, \Cref{proposition: lp relaxation} shows that this specific variant can be solved in polynomial time.
This is surprising, since - as discussed by \citet{caprara2014study} - classical knapsack problems are NP-complete, and bilevel problems built on top of NP-complete single-level problems are typically $\Sigma_2^p$-complete. This implies that they cannot be reformulated as single-level integer programs of polynomial size unless the polynomial hierarchy collapses. Moreover, \citet{caprara2014study} showed that many bilevel knapsack variants studied in the literature are indeed $\Sigma_2^p$-complete. Our variant is, therefore, exceptionally tractable.

While we assumed a risk-neutral agent maximizing expected surplus, \Cref{proposition: lp relaxation} remains valid as long as the agent optimizes an LP-representable risk measure like, for example, the Conditional Value at Risk (CVaR)~\citep{rockafellar2000optimization}. Appendix~\ref{app: conditional value at risk} provides a CVaR-based formulation, which, in our case, maximizes the expected worst-case profit.

\subsection{Infinitely Many Packages} \label{subsec: pre-selection}

We have identified two cases where~\eqref{eq: stochastic bilevel program} can be easily solved: when the number of bids allowed exceeds the number of scenarios, or when the agent is only interested in a finite set of packages. However, these conditions may not always hold, as electricity is a divisible good, and thus, agents could have an infinite number of packages they may be interested in.

The challenge in solving program~\eqref{eq: stochastic bilevel program} or its single-level equivalent~\eqref{eq: stochastic bilevel program II} arises from the inclusion of $v(x)$, which can be an arbitrarily complex function. For example, it could represent the cost function of thermal power plants with intricate unit commitment dynamics or the utility function of an electric vehicle fleet operator shaped by its internal revenue management problem.

When $S>B$ and the agent has infinitely many packages of interest, we adopt a hybrid approach combining both strategies:
\begin{align}\label{eq: heuristic II}
    &\begin{array}{l}
    \text{(i) and (ii) from \eqref{eq: heuristic I}.} \\ 
    \text{(iii) Solve~\eqref{eq: binary program I} to select $B$ packages} \\ 
    \qquad \text{$x_1,\ldots,x_B$ out of the $S$ packages $\bar{x}_1,\ldots,\bar{x}_S$.} \\
    \text{(iii) Submit the package bids} \\
    \qquad\; \text{$\big \{ \big(x_b,v(x_b)\big) \ \big | \ b=1,\ldots,B \big\}$.}
    \end{array}
\end{align}
Although this does not guarantee an optimal solution to~\eqref{eq: stochastic bilevel program}, it enables the agent to quickly determine package bids, even when $B$ and $S$ are large.

\section{Bound on the Welfare Loss} \label{sec: welfare loss}

Assume that no agent has market power and that there is an equilibrium price $\lambda$. If an agent could submit unlimited package bids, they could bid truthfully on every package:
\begin{equation}\label{eq: every package bid}
    \big \{ \big(x_b,v(x_b)\big)\in\mathbb{R}^{T+1} \ \big| \ v(x_b) \neq -\infty \big\}.
\end{equation}
Given the equilibrium, the auctioneer would select the most profitable package as per equation~\eqref{eq: most profitable bid}, allowing the agent to achieve their maximum profit:\endnote{Note that this is equivalent to the profit the agent could achieve with perfect knowledge of $\lambda$. As discussed in the introduction, an auctioneer is unnecessary if everyone knows the equilibrium price.}
\begin{equation*}
    \max_x \; v(x) - \langle \lambda, x \rangle .
\end{equation*}
However, agents can only submit a subset of the bids in equation~\eqref{eq: every package bid}, which may prevent them from achieving their maximal profit. In this section, we develop a bound on the profit loss caused by limiting the number of bids.

Given our equilibrium assumption, the bound on agents' surplus loss translates directly into a bound on the total welfare loss of the auction. This follows from the first welfare theorem in a partial equilibrium context, which states that when each agent maximizes their utility at the equilibrium price, total welfare is also maximized~\citep{mas1995microeconomic}. Therefore, the loss in welfare at price $\lambda$ is the sum of the surplus loss of all agents.

\subsection{Loss of Profit} \label{subsec: missed profit}

The profit loss depends on the agent's bidding strategy. We assume that the agent follows the simple strategy in~\eqref{eq: heuristic I} when fewer scenarios are considered than the allowed number of bids. This simplifies the theoretical analysis and serves as a lower bound compared to the more sophisticated strategies in~\eqref{eq: binary program I} or~\eqref{eq: heuristic II}. Given this strategy, the agent's profit loss is:
\begin{subequations}\label{eq: missed profit}
\begin{align}
& \Gamma(\lambda) \; = \; \max_x \; v(x) - \langle \lambda, x \rangle \label{eq: max profit} \\
& \qquad  - \; \max_x \; \big\{ v(x) - \langle \lambda, x \rangle \ \big| \ x\in\{\bar{x}_1, \ldots, \bar{x}_S\} \big\} , \label{eq: profit under restriction}
\end{align}
\end{subequations}
where~\eqref{eq: max profit} represents the agent’s profit if no bid restrictions exist, and~\eqref{eq: profit under restriction} is the profit achieved with strategy~\eqref{eq: heuristic I}. The profit loss $\Gamma(\lambda)$ depends on the equilibrium price $\lambda$. If one of the $S$ price scenarios in strategy~\eqref{eq: heuristic I} is correct, then $\Gamma(\lambda^\ast)=0$; otherwise, $\Gamma(\lambda)\ge0$.

\subsection{Ex Ante and Ex Post} \label{subsec: ex ante and ex post}

Ex ante, the equilibrium price $\lambda$ is uncertain and can be described by a probability space $(\Omega, \mathcal{F}, \mathcal{P})$, where $\Omega\subseteq\mathbb{R}^T$ is the sample space, $\mathcal{F}\subseteq2^\Omega$ is a $\sigma$-algebra of events and $\mathcal{P}$ is a probability measure on $\mathcal{F}$. 
Ex post, the realized price follows a \textit{degenerate} probability distribution, with $\mathcal{P}(\lambda)=1$ for a specific $\lambda \in \Omega$. The expected profit loss (ex-ante), as well as the actual profit loss (ex-post), are given by the expected value:
\begin{equation*} \label{eq: expected regret}
    \text{E} \big[ \Gamma(\lambda) \big] = \int_{\Omega} \Gamma(\lambda) \ \mathcal{P}(d\lambda) .
\end{equation*}
The agent does not necessarily have prior knowledge of the probability distribution $\mathcal{P}$. Instead, following the bidding strategy described in~\eqref{eq: heuristic I}, the agent considers $S$ distinct price scenarios $\lambda_s$, each of which occurs with a corresponding probability $\pi_s$. These scenarios collectively define a discrete probability distribution over the same measurable space $(\Omega, \mathcal{F})$:
\begin{equation*} \label{eq: measure Q}
\mathcal{Q}(A)=\sum_{s\in\mathcal{S}} \pi_s \cdot \bm{1}_{A}(\lambda_s)
\end{equation*}
where $A \in \mathcal{F}$ is an event and $\bm{1}_{A}(\lambda_s)$ the \textit{Dirac measure} indicating whether $\lambda_s$ is in $A$ or not. 

\subsection{Wasserstein Bound} \label{subsec: wasserstein bound}

To bound the profit loss $\Gamma(\lambda)$, we establish a relation between the probability measures $\mathcal{P}$ and $\mathcal{Q}$. To achieve this, we employ probability metrics, which are commonly used to quantify the similarity between two probability distributions~\citep[cf.][]{rachev2013}. Among these metrics, the Wasserstein distance is particularly notable. Its popularity comes from its deep ties to the theory of optimal transport and its intuitive interpretation as the minimal cost of transporting mass from $\mathcal{P}$ to $\mathcal{Q}$~\citep[cf.][]{villani2009}.
Specifically, the Wasserstein distance of order 1 between $\mathcal{P}$ and $\mathcal{Q}$ is defined as:
\begin{subequations}\label{eq: wasserstein distance}
\begin{align} 
d_W(\mathcal{P}, \mathcal{Q}) = & \\
\ \inf_{\pi} & \; \int_{\Omega\times\Omega} \| \lambda_1, \lambda_2 \|_2 \ \mu(d\lambda_1,d\lambda_2) \\
\ \text{s.t.} & \; \int_{\Omega} \mu(\lambda_1,d\lambda_2)  = \mathcal{P}(\lambda_1) \quad \forall\lambda_1\in\Omega  \\
&  \; \int_{\Omega} \mu(d\lambda_1,\lambda_2)  = \mathcal{Q}(\lambda_2) \quad \forall\lambda_2\in\Omega ,
\end{align} 
\end{subequations}
where $\| \cdot \|_2$ denotes the Euclidean norm. Using the Wasserstein distance $d_W(\mathcal{P},\mathcal{Q})$, we can give a bound on the profit loss as follows. The proof can be found in Appendix~\Ref{app: proof theorem wasserstein}. 

\begin{theorem} \label{theorem: wasserstein}
Let $L=2\cdot \max \big\{ \|x\|_2 \ \big | \ x\in\mathbb{R}^T, v(x) \neq -\infty \big\} $.
Then $\text{E}\big[ \Gamma(\lambda) \big] \le L \cdot d_W(\mathcal{P},\mathcal{Q})$. 
\end{theorem}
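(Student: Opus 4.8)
The plan is to recognize that the per-price loss $\Gamma$ is an $L$-Lipschitz function on $\Omega$ that is pinned to zero at every scenario atom $\lambda_s$, and then to integrate this estimate against an optimal transport plan between $\mathcal{P}$ and $\mathcal{Q}$.

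First I would establish Lipschitz continuity of $\Gamma$. Write $\Gamma(\lambda) = \phi(\lambda) - \psi(\lambda)$ with $\phi(\lambda) = \max_x\{v(x) - \langle\lambda,x\rangle\}$ and $\psi(\lambda) = \max\{v(x) - \langle\lambda,x\rangle \mid x\in\{\bar{x}_1,\ldots,\bar{x}_S\}\}$, where $\bar x_s \in \argmax_x\{v(x)-\langle\lambda_s,x\rangle\}$ are the scenario-optimal profiles from~\eqref{eq: heuristic I}. Both $\phi$ and $\psi$ are pointwise maxima of affine functions of $\lambda$ with slopes $-x$ satisfying $\|x\|_2 \le R := \max\{\|x\|_2 \mid v(x)\neq-\infty\}$, so $L = 2R$. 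A one-line estimate from the defining inequality of the maximum (evaluate $\phi(\lambda')$ at the maximizer for $\phi(\lambda)$, and symmetrically) yields $|\phi(\lambda)-\phi(\lambda')|\le R\|\lambda-\lambda'\|_2$ and likewise for $\psi$; hence $|\Gamma(\lambda)-\Gamma(\lambda')|\le L\|\lambda-\lambda'\|_2$. Next, since $\bar x_s$ lies in the finite set $\{\bar x_1,\ldots,\bar x_S\}$ and attains the maximum defining $\phi(\lambda_s)$, it attains the maximum in $\psi(\lambda_s)$ as well, so $\phi(\lambda_s)=\psi(\lambda_s)$ and $\Gamma(\lambda_s)=0$. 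Combined with $\Gamma\ge 0$, this gives the pointwise bound $\Gamma(\lambda) = \Gamma(\lambda)-\Gamma(\lambda_s) \le L\|\lambda-\lambda_s\|_2$ for every $\lambda\in\Omega$ and every $s\in\mathcal{S}$.

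Then I would invoke optimal transport. Let $\mu^\ast$ be an optimal coupling of $\mathcal{P}$ and $\mathcal{Q}$ attaining the infimum in~\eqref{eq: wasserstein distance}; existence is standard for the Wasserstein-1 distance on $\mathbb{R}^T$ (and if one prefers, one may work with an $\varepsilon$-optimal coupling and let $\varepsilon\downarrow 0$). Because $\mathcal{Q}$ is supported on the finite set $\{\lambda_1,\ldots,\lambda_S\}$, the second marginal of $\mu^\ast$ assigns full mass to that set, so for $\mu^\ast$-almost every pair $(\lambda_1,\lambda_2)$ we have $\lambda_2\in\{\lambda_1,\ldots,\lambda_S\}$ and therefore $\Gamma(\lambda_1)\le L\|\lambda_1-\lambda_2\|_2$ by the previous step. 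Since $\Gamma$ is a difference of convex (hence Borel) functions, it is measurable, and integrating the marginal relation of $\mu^\ast$ gives
\begin{align*}
\text{E}\big[\Gamma(\lambda)\big] &= \int_{\Omega} \Gamma(\lambda_1)\,\mathcal{P}(d\lambda_1) = \int_{\Omega\times\Omega}\Gamma(\lambda_1)\,\mu^\ast(d\lambda_1,d\lambda_2) \\
&\le L\int_{\Omega\times\Omega}\|\lambda_1-\lambda_2\|_2\,\mu^\ast(d\lambda_1,d\lambda_2) = L\cdot d_W(\mathcal{P},\mathcal{Q}),
\end{align*}
which is the claim.

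I expect the only delicate points to be bookkeeping rather than substance: handling the case where the maxima defining $\phi$ or $\psi$ are merely suprema (use $\varepsilon$-maximizers in the Lipschitz estimate) or where $R=\infty$ (the bound is then vacuous), and citing the existence of an optimal coupling for $d_W$. The conceptual crux — and the step I would flag as the heart of the argument — is the observation that the missing-bid loss is an $L$-Lipschitz function vanishing on the support of $\mathcal{Q}$, which is exactly the structure that a Wasserstein-1 bound exploits; everything else is routine.
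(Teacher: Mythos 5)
Your proposal is correct and follows essentially the same route as the paper's proof: establish that the unrestricted and restricted indirect-profit functions are each Lipschitz with constant $\max\{\|x\|_2 : v(x)\neq-\infty\}$ (so $\Gamma$ is $L$-Lipschitz), note that $\Gamma$ vanishes on the support of $\mathcal{Q}$, and integrate against an optimal coupling to obtain the Wasserstein bound. The only cosmetic difference is that you apply the vanishing of $\Gamma$ on the scenario atoms pointwise under the coupling, whereas the paper uses the integrated identity $\int_\Omega \Gamma \, \mathcal{Q}(d\lambda)=0$ and then bounds $|\Gamma(\lambda_1)-\Gamma(\lambda_2)|$ by Lipschitz continuity; the two are equivalent.
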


\Cref{theorem: wasserstein} highlights the interaction between price uncertainty and the number of bids $B$. When uncertainty is high, $\mathcal{P}$ is more dispersed, necessitating a larger number of bids $B$ for the agent to ``obtain'' a $\mathcal{Q}$ that closely approximates $\mathcal{P}$. Conversely, when uncertainty is low, $\mathcal{P}$ is more concentrated, allowing a good approximation even with fewer bids~(\Cref{fig: probability distributions}).

\section{Simulations: Generator, Storage, and Flexible Demand} \label{sec: simulations}

So far, we have discussed an abstract auction with $T$ commodities and generic agents defined by their valuation $v(x)$. Now, we shift to explicit simulations using the 2023 German bidding zone in the coupled European day-ahead auction, where the commodities represent the 24 hours of the upcoming day. We use historical electricity prices from the ENTSO-E transparency platform~\citep{entsoe}, assuming that they reflect competitive equilibrium prices, despite potential market power and paradoxically rejected bids.
To illustrate that XOR bids work well for all types of participants, we consider three exemplary agents: 
\begin{itemize}
    \item a \textbf{thermal generator} modeled with the classical unit commitment framework~\citep{carrion2006computationally, karasavvidis2021optimal}.
    \item a \textbf{battery storage system} based on its standard model~\citep{de2019implications, karasavvidis2021optimal}.
    \item a \textbf{district heating utility} employing heat pumps or gas boilers with demand-shifting capabilities through thermal storage, inspired by~\cite{bobo2021price} and utilizing heat load data from~\cite{ruhnau2019time}.
\end{itemize}
All agents are modelled with an infinite number of potential packages of interest. Consequently, we apply algorithm~\eqref{eq: heuristic II} to select XOR package bids. We then analyze the profit loss of this bidding strategy, performing sensitivity analyzes on the number of bids, the number of scenarios, and the quality of price information.

The code to replicate these experiments can be accessed online at \url{https://github.com/ThomasHubner/Package_Bids_Electricity_Auctions}. 
All experiments were run on a laptop with 32 GB RAM and an Intel i7-1260P processor. Optimization models were solved with Gurobi version 11 in standard settings. The execution time for Algorithm~\eqref{eq: heuristic II} was below 50 seconds for all cases considered.

\subsection{Case Studies} \label{subsec: case study description}

\paragraph{Thermal generator}
The plant can generate power within its minimum stable and maximum generation capacities and sell the output in the day-ahead auction. It can ramp up and down between hours, subject to a maximum ramping rate. Production costs consist of marginal costs, start-up and shutdown costs, and no-load costs. Additionally, when the plant is shut down (or started up), it must remain off (or on) for a predefined number of hours before it can start up (or shut down) again. A detailed model is provided in Appendix~\ref{app: thermal generator}, and the associated parameters are listed in~\Cref{table: thermal generator}.

\paragraph{Battery storage system} The battery can charge and discharge power, allowing it to both buy and sell in the day-ahead auction. Its state of charge must stay within its limits, and both charging/discharging efficiencies are considered. We assume that the battery's state of charge at the end of the day must be the same as at the beginning of the day.
The detailed model is given in Appendix~\ref{app: battery}, with the parameters listed in~\Cref{table: battery storage}.

\paragraph{District heating utility} We consider a utility that needs to supply a heat load using electric boilers, thus purchasing power in the day-ahead auction. The utility has aggregated heat storage available to shift the load between hours. It can also use a gas boiler if electricity prices are too high or reduce the load based on customer contracts, though it is not compensated for any curtailed load.
The detailed model is provided in Appendix~\ref{app: district heating utility}, with the parameters listed in~\Cref{table: district heating utility}.

\begin{table}[!htb]
    \footnotesize
    \centering
    \caption{Parameters used in the three case studies.}
    \label{table: case study data}
    \begin{subtable}{0.45\linewidth}
        \centering
        \caption{Thermal generator}
        \label{table: thermal generator}
        \begin{tabular}{|c|c|}
            \hline
            \textbf{Description} & \textbf{Value} \\
            \hline
            No-load cost (€/h) & 10,000 \\
            \hline
            Marginal cost of 0-200 MW (€/MWh) & 70 \\
            \hline
            Marginal cost of 200-400 MW (€/MWh) & 90 \\
            \hline
            Marginal cost of 400-600 MW (€/MWh) & 120 \\
            \hline
            Start-up cost (€/start-up) & 4,000 \\
            \hline
            Shut-down cost (€/shut-down) & 3,000 \\
            \hline
            Minimum stable generation (MW) & 100 \\
            \hline
            Maximum power output (MW) & 600 \\
            \hline
            Ramp-up rate (MW/h) & 200 \\
            \hline
            Ramp-down rate (MW/h) & 200 \\
            \hline
            Minimum up-time (h) & 4 \\
            \hline
            Minimum downtime (h) & 4 \\
            \hline
            Initial operating state (MW) & 0 \\
            \hline
            Initial off hours (h) & 0 \\
            \hline
        \end{tabular}
    \end{subtable}
    \hfill    
    \begin{subtable}{0.45\linewidth}
        \centering
        \caption{District heating utility}
        \label{table: district heating utility}
        \begin{tabular}{|c|c|}
            \hline
            \textbf{Description} & \textbf{Value} \\
            \hline
            Efficiency electric boiler & 1 \\
            \hline
            Efficiency gas boiler & 0.9 \\
            \hline
            Storage losses (\%/h) & 1 \\
            \hline
            Cost gas (€/MWh) & 90 \\
            \hline
            Value of served load (€/MWh) & 120 \\
            \hline
            Capacity electric boiler (MW) & 30 \\
            \hline
            Capacity gas boiler (MW) & 10 \\
            \hline
            Capacity heat storage (MWh) & 40 \\
            \hline
            Maximum storage charging limit (MW) & 20 \\
            \hline
            Maximum storage discharging limit (MW) & 20 \\
            \hline
            Initial state-of-charge storage (MWh) & 0 \\
            \hline
            \multicolumn{2}{|c|}{\underline{Hourly Heat Load (MW):} 19, 20, 20, 21, 24,} \\
            \multicolumn{2}{|c|}{32, 38, 36, 36, 35, 33, 32, 31, 31, } \\
            \multicolumn{2}{|c|}{31, 32, 33, 33, 33, 33, 32, 29, 23, 20} \\
            \hline
        \end{tabular}
    \end{subtable}
    
    \vspace{0.7cm}
    
    \begin{subtable}[b]{0.45\linewidth}
        \centering
        \caption{Battery storage system}
        \label{table: battery storage}
        \begin{tabular}{|c|c|}
            \hline
            \textbf{Description} & \textbf{Value} \\
            \hline
            Maximum charging limit (MW) & 10 \\
            \hline
            Maximum discharging limit (MW) & 10 \\
            \hline
            Charging efficiency & 0.9 \\
            \hline
            Discharging efficiency & 0.9 \\
            \hline
            Minimum feasible state-of-charge (MWh) & 0 \\
            \hline
            Maximum feasible state-of-charge (MWh) & 20 \\
            \hline
            Initial state-of-charge (MWh) & 10 \\
            \hline
        \end{tabular}
    \end{subtable}
\end{table}

\subsection{Probabilistic Price Forecasts} \label{subsec: price forecasts}

The quality of bid selection depends on the accuracy of the forecasted price scenarios. We use a common two-step approach to generate these scenarios, widely adopted in both literature and industry~\citep{lohndorf2013optimizing}, and shown to perform competitively~\citep{grothe2023point}. In the first step, we generate a point forecast for the price. In the second, we apply a post-processing method using the residuals from previous days to create the scenarios.

For the point forecast $y_d$ of the electricity price $\lambda_d$ on day $d$, we use an autoregressive model with LASSO regularization, implemented in the open-source Python package \textit{epftoolbox}~\citep{lago2021forecasting}. This model is trained on German price data from 1/1/2019 to day $d$.

Using the point forecast $y_d$, we generate $S$ price scenarios $\lambda_{ds}$ with equal probability $\pi_{ds} = \frac{1}{S}$ by:
\begin{subequations} \label{eq: scenario generation}
\begin{align}
&\lambda_{d1} = y_d, \\
&\lambda_{d2} = y_d - (y_{d-1} - \lambda_{d-1}), \\
&\lambda_{ds} = y_d - (y_{d-s-1} - \lambda_{d-s-1}) ,
\end{align}
\end{subequations}
where $(y_{d-1} - \lambda_{d-1})$ represents the forecast residual of the previous day. This approach generates $S$ price scenarios by adding the residuals of the last $S-1$ days to the point forecast $y_d$.

Since this method is simple to implement and publicly available, it serves as a lower bound on the price information accessible to bidders, who may use more advanced commercial forecasting methods.

\subsection{Number of Bids} \label{subsec: results bids}

In the first set of experiments, we examine the profit loss as a function of the number of XOR package bids submitted by the agent. Using algorithm~\eqref{eq: heuristic II}, we consider $S=150$ scenarios and bid numbers $B=1, 2, 5, 10, 20, 50, 100$. The case $B=1$ can be interpreted as a ``self-schedule'', where the agent provides the auctioneer with only a single inflexible schedule.\endnote{The term ``self-schedule'' is primarily used in US auctions to refer to an alternative to parametric bid formats, where asset owners, rather than the auctioneer's algorithm, are responsible for dispatching the resource. Specifically, when parametric bids were introduced for battery storage systems, ISOs adopted a ``self-schedule model'' for these systems, allowing storage operators to directly manage dispatch. This approach was implemented in response to the potential inaccuracies of parametric bid formats for storage~\citep{herrero2020evolving}.} To ensure robust results, we randomly sample 100 days from 2023.

\Cref{fig: results bids} shows the average profit achieved with algorithm~\eqref{eq: heuristic II}, expressed as a percentage of the maximum achievable profit~\eqref{eq: max profit} based on the realized uniform price. Assuming equilibrium prices, the auctioneer accepts the most profitable bid as defined by~\eqref{eq: most profitable bid}. Missed profits in \% are indirectly given by 100\% minus the profit achieved presented in~\Cref{fig: results bids}.

\begin{figure}[t]
\caption{Average profit by number of bids for 100 random days in 2023.}
\centering
\includegraphics[width=0.6\textwidth]{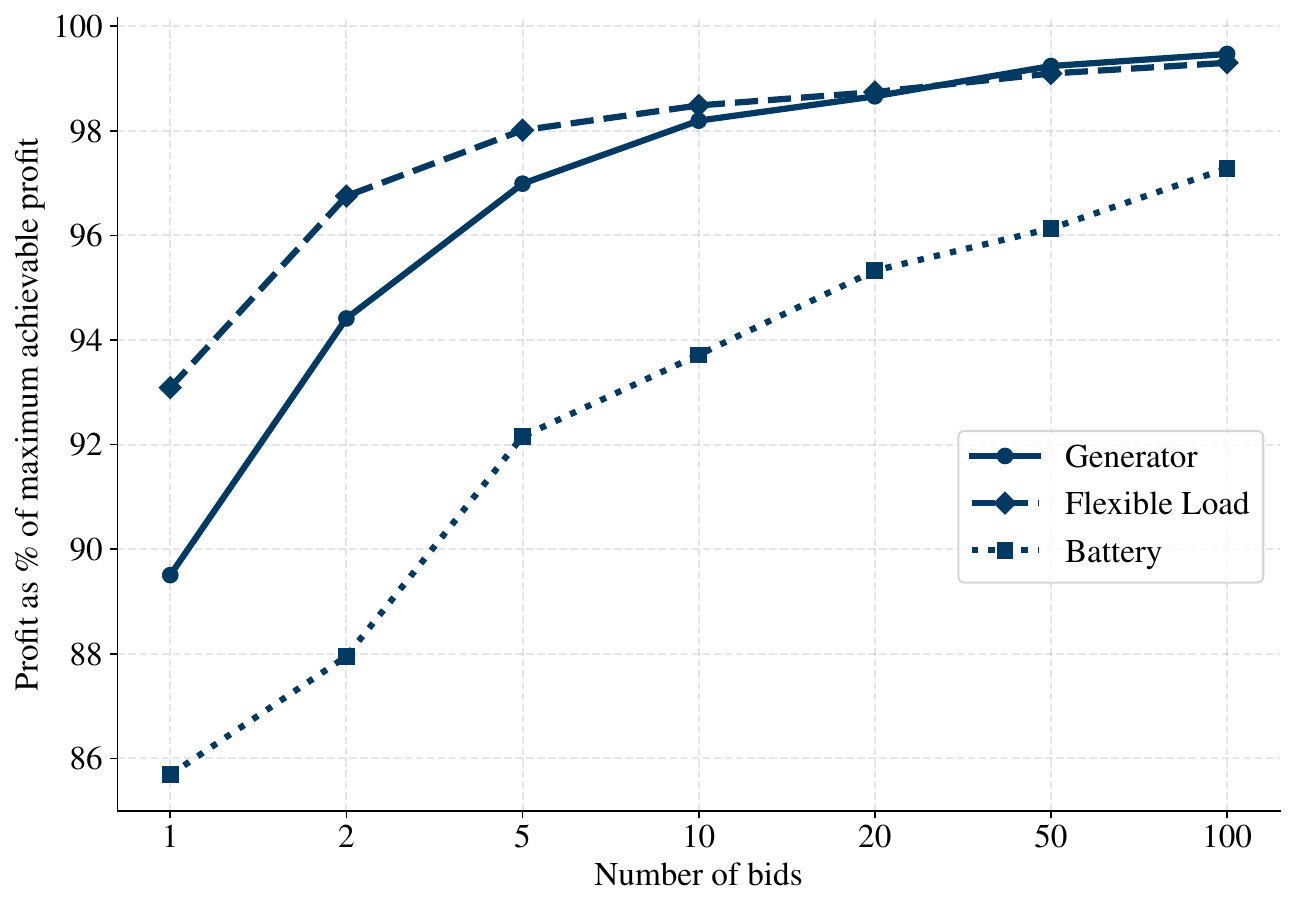}
\label{fig: results bids}
\end{figure}

The results indicate that increasing the number of bids significantly improves profits. Generators and flexible loads, being less sensitive to price uncertainty, perform well with fewer bids. In contrast, battery storage systems, which rely on accurate price forecasts for effective temporal arbitrage, improve with higher bid counts and eventually catch up to the other participants.

\subsection{Number of Scenarios} \label{subsec: results scenarios}

In the second set of experiments, we analyze the loss of profit as a function of the number of scenarios considered when selecting XOR bids by algorithm~\eqref{eq: heuristic II}. Specifically, we generate $B=24$ package bids while accounting for $S$ = 24, 50, 100, 200, and 400 scenarios.

It is worth noting that for $S=24$, algorithm~\eqref{eq: heuristic II} simplifies to algorithm~\eqref{eq: heuristic I} due to $B\ge S$, resulting in package bids that correspond to a solution of the bilevel program~\eqref{eq: stochastic bilevel program}.

To evaluate performance, we sample 100 days and compute the average profit, as detailed in \Cref{subsec: results bids}. The results are summarized in \Cref{fig: results scenarios}. While increasing the number of scenarios from 24 to 50 yields a significant improvement, further increases beyond 50 do not demonstrate a comparable benefit. This limitation can be attributed to our method used to generate scenarios from historical data (\Cref{subsec: price forecasts}); as more scenarios are included, the model incorporates increasingly ``outdated'' information, diminishing its effectiveness.

\begin{figure}[t]
\caption{Average achieved profit in relation to the number of scenarios.}
\centering
\includegraphics[width=0.6\textwidth]{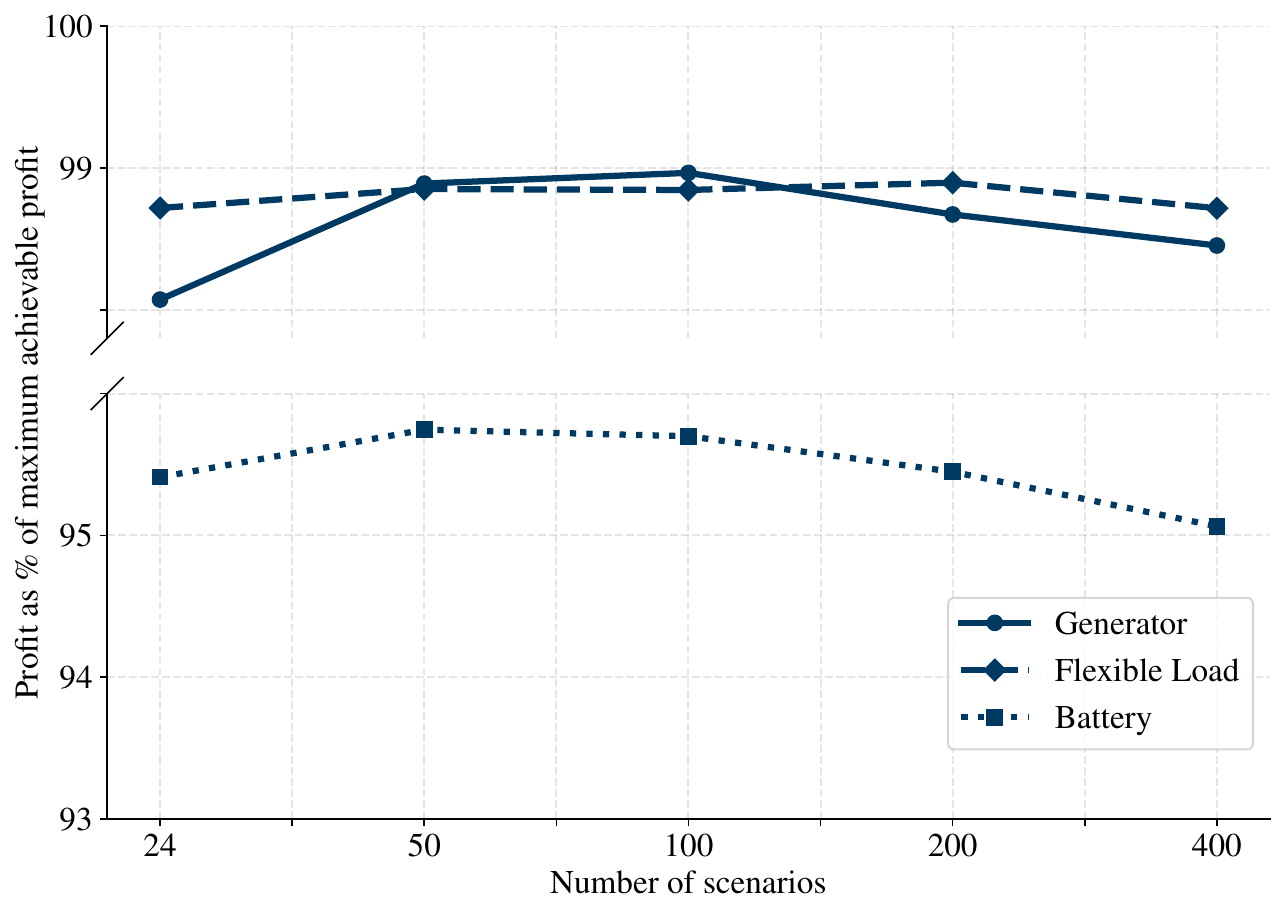}
\label{fig: results scenarios}
\end{figure}

\subsection{Quality of Price Information} \label{subsec: results price information}

In the final set of experiments, we investigate the loss of profit as a function of the quality of the price forecasts. For this analysis, we fix
$B=24$ and generate $S=100$ price scenarios $\lambda_{ds}$ for each day $d$, following the procedure in~\Cref{subsec: price forecasts}. To evaluate the impact of forecast accuracy, we progressively tighten the scenarios towards the actual price $\lambda_d$. Specifically, we define adjusted scenarios as $\lambda_{ds}^\prime = \lambda_{ds} - a \cdot (\lambda_{ds} - \lambda_{d})$, where $a$ is a scaling parameter.

This adjustment simulates improved forecast accuracy, as tightening scenarios closer to the true price can be interpreted as refining the forecasts. We consider five different values of~$a$: $a=0,0.25,0.5,0.75,1$. The case $a=0$ corresponds to the baseline forecast used in previous analyses, while $a=1$ represents perfect information.
To quantify the impact of these adjustments, we computed the Wasserstein distance between the discrete probability distribution formed by the 100 scenarios and the degenerate probability distribution centered on the actual price. This metric is plotted on the x-axis of~\Cref{fig: results information} for each value of $a$.

For the corresponding graph, we sampled 100 days and computed the average profit, as described in~\Cref{subsec: results bids}. The results show that as the accuracy of the forecast improves (i.e., as the Wasserstein distance decreases), profits increase, eventually converging to 100\% in the case of perfect information.

\begin{figure}[t]
\caption{Achieved profit in relation to the quality of price scenarios.}
\centering
\includegraphics[width=0.6\textwidth]{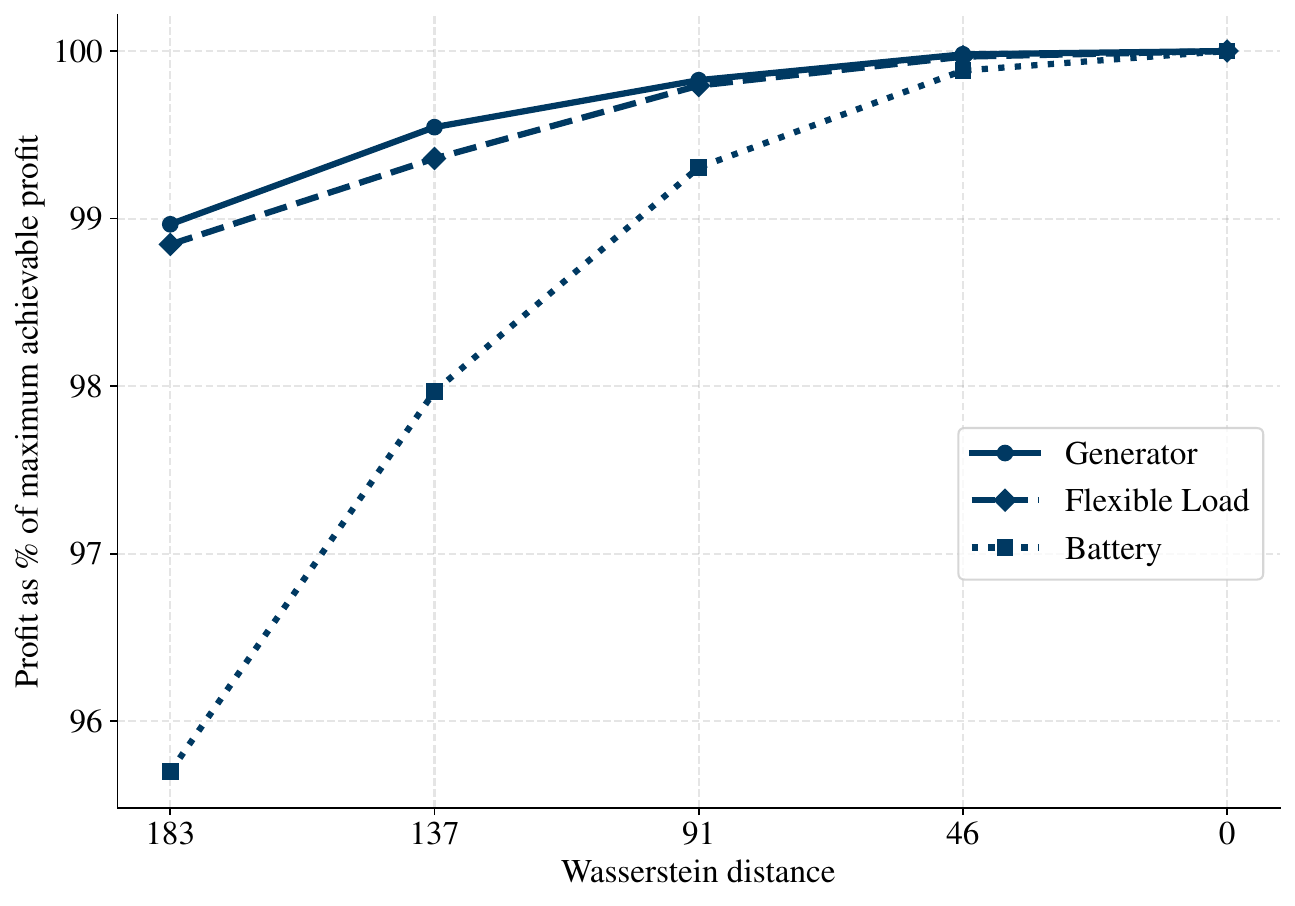}
\label{fig: results information}
\end{figure}

\section{Discussions} \label{sec: disucsions}

If an auctioneer offers both parametric and XOR package bids, agents can choose the format that suits them. Agents whose preferences cannot be well represented by parametric bids or who prefer not to disclose detailed information can use the bid selection procedure in \Cref{sec: bid selection} to minimize profit loss. Agents whose preferences align with parametric bids can use them to completely avoid the risk of profit loss.

\subsection{OR Package Bids} \label{subsec: OR package bids}
An alternative to parametric bids to address the missing bid problem in XOR bids is to use OR package bids~\citep{bichler2023taming}. Unlike XOR bids, which restrict acceptance to a single package, OR bids allow all submitted packages to be accepted independently. By submitting $B$ package bids, an agent effectively communicates interest in all $2^B$ combinations of these packages.
However, OR bids are suitable only for agents with limited patterns of substitution in their preferences~\citep{nisan2006bidding}. This is because OR bids cannot express that two packages are substitutes, where accepting one excludes the other. 
This limitation is especially problematic in electricity auctions, where agents have preferences shaped by intertemporal constraints and upper limits on production or consumption. These constraints create strong substitution patterns: If too many bids are accepted, agents may exceed their capacity to deliver or consume, forcing costly after-auction adjustments.
For agents such as shiftable loads or hydro reservoirs, substitution is even more critical - they operate based on producing or consuming in one period or another, but not both.\endnote{The power exchange EpexSpot also permits ``linking'' multiple package bids, meaning a child bid is accepted only if its parent bid is accepted~\citep{karasavvidis2021optimal}. However, linked package bids face the same limitation as OR bids, as they can only capture a narrow range of substitution patterns in the agent's preferences.}

Ironically, the EpexSpot power exchange allows agents to submit up to 100 OR package bids, but limits the XOR package bids to only 24~\citep{epexspot}. Discussions with practitioners suggest that this restriction historically stems from the so-called flexi orders - a specific type of XOR bid. Flexi orders allowed agents to group several single commodity bids (consumption or production within a single time interval) into an exclusive set. For example, hydro reservoirs used this mechanism to let the auctioneer decide the optimal release time for stored water. Since a day consists of only 24 hours, grouping more than 24 hourly bids was unnecessary, leading to the 24-bid limit. However, when package bids are grouped, rather than single-commodity bids, this limit becomes obsolete. Computational concerns are unlikely the cause, given that up to 100 OR bids can be submitted.

\begin{figure}[t]
\caption{Illustration of XOR bids with a minimum acceptance ratio of 0.}
\centering
\includegraphics[width=0.6\textwidth]{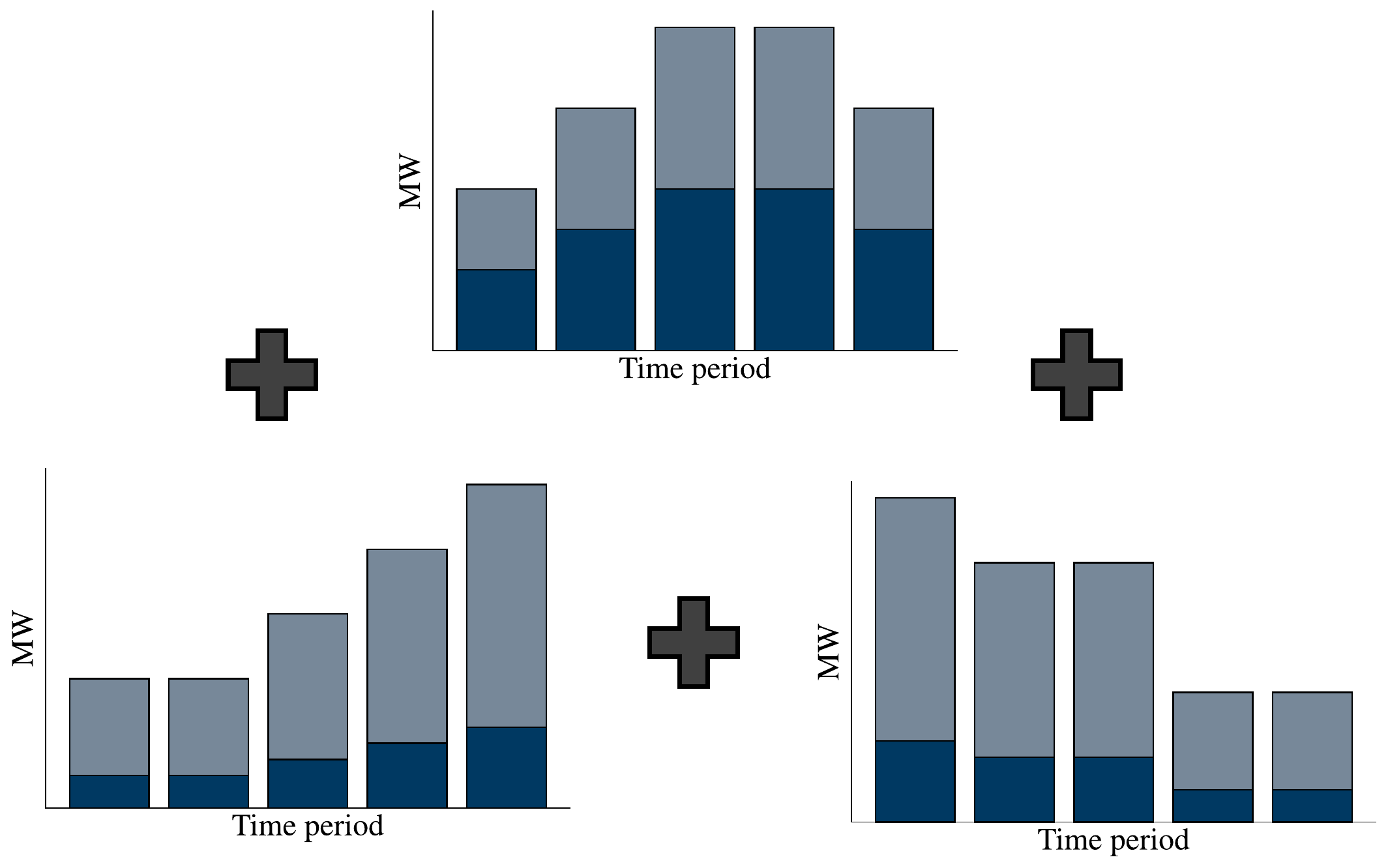}

{\footnotesize \textit{Notes:} The upper profile is accepted by 50\% whereas the lower profiles are accepted by 25\%. In general, any convex combination of those profiles is possible if the minimum acceptance ratio of each profile is 0.}
\label{fig: curtailed XOR bids}
\end{figure}

\subsection{Convex Package Bids} \label{subsec: convex package bids}

Limiting the number of package bids is problematic because it is not causal; package bids only introduce a significant computational burden when they are nonconvex. However, since electricity is divisible, agents can specify a minimum acceptance ratio when submitting package bids. If this ratio is set to 0, the package bid becomes convex, as partial acceptance is allowed without restrictions.
Formally, for each package bid $(x_b,p_b)$, the agent specifies a minimum acceptance ratio $\alpha\in[0,1]$. The actual acceptance rate $\delta_b$ of a bid must then satisfy $\delta_b\in\{0\} \cup [\alpha_b,1]$. If multiple bids $b=1, \ldots,B$ are submitted as XOR bids, the constraint $\sum_{b\in\mathcal{B}} \delta_b \le 1$ must be satisfied. Finally, the agent is allocated a package $x^\prime = \sum_{b\in\mathcal{B}} \delta_b \cdot x_b$, which may be a convex combination of its package bids~(\Cref{fig: curtailed XOR bids}).

Looking at the relative use of minimum acceptance ratios for package bids in the German bidding zone in 2023, we observe that 16.7\% of the bids had a ratio of 0 (that is, convex bids), 2.7\% were submitted with a ratio strictly between 0 and 1, and 80.6\% of the bids had a ratio of 1 - a so-called ``fill-or-kill'' condition.$^9$ While there are certainly agents with nonconvex preferences who need to submit package bids with $\alpha>0$, this clearly does not apply to every agent.
Auctioneers could mitigate the welfare loss due to missing bids while still keeping optimization tractable by imposing a tight limit on package bids with $\alpha>0$, but allowing more package bids with $\alpha=0$.

\subsection{Aggregators and Portfolios} \label{subsec: aggregators and portfolios}

Potential synergies between assets that consume or produce power encourage pooling them together. A notable example is the growing role of demand-side aggregators for distributed energy resources~\citep{burger2017review}. To fully harness these synergies for societal welfare, it is necessary to allow coordinated bidding by such entities. XOR bids provide a practical format for this purpose, as designing parametric bids for aggregated resources is not a straightforward task~\citep{liu2015extending,herrero2020evolving}.

However, while synergies may justify bidding as a portfolio, concerns about potential market power abuse also arise. Ideally, large participants, such as major thermal power plants, would employ parametric bid formats to transparently disclose detailed cost information to the auctioneer, facilitating the detection of market power abuse~\citep{adelowo2024redesigning}.
When participants avoid such transparency, for example, through self-scheduling or by opting for package or single-commodity bids, they impede the auctioneer’s ability to detect and mitigate market manipulation. Permitting large participants to cite ``synergies'' among assets as a rationale to avoid parametric bid formats thus risks reducing welfare~\citep{ahlqvist2022survey}.

\section{Conclusions} \label{sec: conclusion}

Effective bid formats that allow agents to express preferences accurately are vital for efficient auctions. Auctioneers benefit from offering technology-specific parametric bid formats alongside universal and simple XOR package bids. To minimize welfare losses from limits on package bids, it helps to share as much information as possible about likely uniform prices. Allowing many convex package bids while placing tight limits only on nonconvex ones can keep optimization manageable while mitigating welfare losses.

Participants are encouraged to use parametric bid formats whenever possible. If these are not an option, it is advisable to submit the maximum number of packages allowed. When the number of desired bids exceeds the limit, bid selection algorithms can assist in prioritizing those with the highest expected profit.

This study assumes a competitive equilibrium and does not address price manipulation through market power~\citep{reguant2014complementary} or exploitation of imperfect equilibrium outcomes~\citep{byers2022auction}. Theoretical and empirical evidence by~\cite{starr1969quasi},~\cite{bichler2018matter} or~\cite{hubner2025approximate} suggest that while uniform prices may not perfectly align with competitive equilibrium outcomes, they are often also not too far off.
Further research could help to better understand the welfare losses associated with imperfect bid formats when competitive equilibrium is not assumed.

\section*{Acknowledgements}{The research published in this report was carried out with the support of the Swiss Federal Office of Energy (SFOE) as part of the
SWEET PATHFNDR project. The authors bear sole responsibility for the conclusions and the results. The authors would like to thank Prof. Quentin Lété and Prof. Daniel Kuhn for their helpful comments and insightful discussions.}

\begingroup \parindent 0pt \parskip 0.0ex \def\enotesize{\small} \theendnotes \endgroup

\appendix 
\setcounter{section}{0}
\renewcommand{\thesection}{Appendix \Alph{section}}

\section{Proof of Theorem 1}\label{app: proof theorem wasserstein}

Denote the function in~\eqref{eq: max profit} by $\psi(\lambda)$ and the function in~\eqref{eq: profit under restriction} by $\varphi(\lambda)$. The profit loss is then given by $\Gamma(\lambda) = \psi(\lambda) - \varphi(\lambda)$. In a first step, we show that both $\psi$ and $\varphi$ are Lipschitz-continuous with constant $L=\max \big\{ \|x\|_2 \ \big | \ x\in\mathbb{R}^T, v(x) \neq -\infty \big\}$.
In microeconomics, $\psi$ is known as the indirect utility function, which is known to be continuous if $v$ is continuous~\citep{mas1995microeconomic}. However, we cannot necessarily assume $v$ to be continuous in electricity markets. Despite this, we can still establish the stronger property of Lipschitz continuity for $\psi(\lambda)$, as follows:

\begin{lemma}\label{lemma: lipschitz continuity 1}
For all $\lambda_1, \lambda_2 \in \Omega$ holds $| \psi(\lambda_1) - \psi(\lambda_2) | \ \le \ L \cdot \| \lambda_1 - \lambda_2 \|_2$.
\end{lemma}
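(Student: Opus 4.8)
The plan is to establish Lipschitz continuity of the indirect utility function $\psi(\lambda) = \max_x \{v(x) - \langle \lambda, x\rangle\}$ directly from the definition, using only the fact that the optimization is over the set $D = \{x \in \mathbb{R}^T : v(x) \neq -\infty\}$, whose elements have Euclidean norm bounded by $L$ by the definition of $L$. The key observation is that $\psi$ is a pointwise supremum of affine functions of $\lambda$, each with slope $-x$ for some $x \in D$, and all these slopes have norm at most $L$; a pointwise supremum of $L$-Lipschitz functions is $L$-Lipschitz.

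Concretely, first I would fix $\lambda_1, \lambda_2 \in \Omega$ and, assuming the maximum in the definition of $\psi(\lambda_1)$ is attained (as the paper assumes throughout, cf.\ the use of $\max$ rather than $\sup$), let $x^\ast$ be an optimal profile for $\lambda_1$, so that $v(x^\ast) \neq -\infty$ and hence $\|x^\ast\|_2 \le L$. Then
\begin{align*}
\psi(\lambda_1) - \psi(\lambda_2)
&= \big(v(x^\ast) - \langle \lambda_1, x^\ast\rangle\big) - \max_x \{v(x) - \langle \lambda_2, x\rangle\} \\
&\le \big(v(x^\ast) - \langle \lambda_1, x^\ast\rangle\big) - \big(v(x^\ast) - \langle \lambda_2, x^\ast\rangle\big) \\
&= \langle \lambda_2 - \lambda_1, x^\ast\rangle \le \|\lambda_1 - \lambda_2\|_2 \cdot \|x^\ast\|_2 \le L \cdot \|\lambda_1 - \lambda_2\|_2,
\end{align*}
where the first inequality uses that $x^\ast$ is feasible (though not necessarily optimal) for $\lambda_2$, and the penultimate step is Cauchy--Schwarz. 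By symmetry, swapping the roles of $\lambda_1$ and $\lambda_2$ gives $\psi(\lambda_2) - \psi(\lambda_1) \le L \cdot \|\lambda_1 - \lambda_2\|_2$, and combining the two bounds yields $|\psi(\lambda_1) - \psi(\lambda_2)| \le L \cdot \|\lambda_1 - \lambda_2\|_2$.

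The main thing to be careful about is the attainment of the maximum: if for some $\lambda$ the supremum defining $\psi(\lambda)$ is not attained, the argument above needs a standard $\varepsilon$-approximation fix — pick $x^\ast$ with $v(x^\ast) - \langle \lambda_1, x^\ast\rangle \ge \psi(\lambda_1) - \varepsilon$, run the same chain of inequalities, and let $\varepsilon \to 0$. Since the paper explicitly works under the assumption that optima exist, I would either invoke that assumption or include the $\varepsilon$-version in one line for robustness. Everything else is routine: no convexity, continuity, or boundedness of $v$ itself is needed — only that the feasible profiles live in a norm-bounded set, which is exactly what the definition of $L$ encodes. (The analogous statement for $\varphi$, the restricted maximum over $\{\bar{x}_1, \ldots, \bar{x}_S\} \subseteq D$, follows verbatim since that finite set is also contained in $D$ and hence norm-bounded by $L$.)
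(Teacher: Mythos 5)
Your proof is correct and follows essentially the same route as the paper's: take an optimizer $x^\ast$ for one price, use its feasibility for the other price to bound the difference by $\langle \lambda_2 - \lambda_1, x^\ast\rangle$, apply Cauchy--Schwarz, and conclude by symmetry (the paper phrases the symmetry as an explicit two-case distinction, but the argument is identical). Your added remarks on the $\varepsilon$-approximation when the supremum is not attained and on $\psi$ being a pointwise supremum of $L$-Lipschitz affine functions are correct but not needed beyond what the paper already assumes.
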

\begin{proof}
Let $\lambda_1,\lambda_2\in\Omega$. Distinguish two cases: (i) $\psi(\lambda_1) \ge \psi(\lambda_2)$ and (ii) $\psi(\lambda_1) < \psi(\lambda_2)$. 

\textit{Case (i):} Let $x_1\in\argmax v(x) - \langle \lambda_1, x \rangle$. From $\psi(\lambda_1) \ge \psi(\lambda_2)$ and $\psi(\lambda_2)\ge v(x_1) - \langle\lambda_2,x_1\rangle$ follows
\begin{subequations} \label{eq: lipschitz proof I}
\begin{align*} 
&| \psi(\lambda_1) - \psi(\lambda_2) |   \\
&= \psi(\lambda_1) - \psi(\lambda_2) \\
&\le v(x_1) - \langle \lambda_1, x_1 \rangle - v(x_1) + \langle \lambda_2, x_1 \rangle \\
&= \langle \lambda_2-\lambda_1, x_1 \rangle .
\end{align*}
\end{subequations}
Combining this with the \textit{Cauchy-Schwarz inequality}, we arrive at
\begin{align*}
| \psi(\lambda_1) - \psi(\lambda_2) | 
& \le \ | \langle \lambda_2-\lambda_1, x_1 \rangle |  \\
& \le \  \|x_1 \|_2 \cdot \| \lambda_1 - \lambda_2 \|_2 .
\end{align*}

\textit{Case (ii):} Let $x_2\in\argmax v(x) - \langle \lambda_2, x \rangle$. By similar arguments as in case (i) follows 
$$| \psi(\lambda_1) - \psi(\lambda_2) | \le \|x_2 \|_2 \cdot \| \lambda_1 - \lambda_2 \|_2 .$$

Finally, the statement follows by $\forall \lambda\in\Omega$ and $\forall x \in\argmax v(x) - \langle \lambda, x \rangle$ holds $\|x\|_2 \le \max \big\{ \|x^\prime\|_2 \ \big | \ x^\prime\in\mathbb{R}^T, v(x^\prime) \neq -\infty \big\}$. 
\end{proof}

\begin{lemma} \label{lemma: lipschitz continuity 2}
For all $\lambda_1, \lambda_2 \in \Omega$ holds  $| \varphi(\lambda_1) - \varphi(\lambda_2) | \ \le \ L \cdot \| \lambda_1 - \lambda_2 \|_2$.
\end{lemma}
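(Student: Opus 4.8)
The plan is to mimic the proof of \Cref{lemma: lipschitz continuity 1} almost verbatim, since $\varphi$ is structurally the same kind of object as $\psi$: it is a pointwise maximum of finitely many affine functions of $\lambda$, namely $\varphi(\lambda) = \max\{ v(\bar{x}_s) - \langle \lambda, \bar{x}_s\rangle \mid s = 1, \ldots, S\}$ (recalling that the restriction $x \in \{\bar{x}_1, \ldots, \bar{x}_S\}$ means the $v(\bar{x}_s)$ are fixed scalars). A maximum of affine functions with gradients $-\bar{x}_s$ is automatically Lipschitz with constant $\max_s \|\bar{x}_s\|_2$, and since each $\bar{x}_s$ satisfies $v(\bar{x}_s) \neq -\infty$, we have $\|\bar{x}_s\|_2 \le \max\{\|x\|_2 \mid v(x) \neq -\infty\} = L$.

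Concretely, I would fix $\lambda_1, \lambda_2 \in \Omega$ and, without loss of generality by symmetry, assume $\varphi(\lambda_1) \ge \varphi(\lambda_2)$. Let $s^\star$ be an index achieving the maximum in $\varphi(\lambda_1)$, i.e.\ $\varphi(\lambda_1) = v(\bar{x}_{s^\star}) - \langle \lambda_1, \bar{x}_{s^\star}\rangle$. Since $\bar{x}_{s^\star}$ is one of the candidate profiles, $\varphi(\lambda_2) \ge v(\bar{x}_{s^\star}) - \langle \lambda_2, \bar{x}_{s^\star}\rangle$. Subtracting gives
\begin{align*}
0 \le \varphi(\lambda_1) - \varphi(\lambda_2) &\le \big( v(\bar{x}_{s^\star}) - \langle \lambda_1, \bar{x}_{s^\star}\rangle \big) - \big( v(\bar{x}_{s^\star}) - \langle \lambda_2, \bar{x}_{s^\star}\rangle \big) \\
&= \langle \lambda_2 - \lambda_1, \bar{x}_{s^\star}\rangle .
\end{align*}
Then I would apply the Cauchy--Schwarz inequality to bound $|\langle \lambda_2 - \lambda_1, \bar{x}_{s^\star}\rangle| \le \|\bar{x}_{s^\star}\|_2 \cdot \|\lambda_1 - \lambda_2\|_2$, and finally bound $\|\bar{x}_{s^\star}\|_2 \le L$ using $v(\bar{x}_{s^\star}) \neq -\infty$, which closes the argument.

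One subtlety worth a sentence in the write-up: the profiles $\bar{x}_1, \ldots, \bar{x}_S$ are obtained from step (ii) of strategy~\eqref{eq: heuristic I} as $\bar{x}_s \in \argmax v(x) - \langle \lambda_s, x\rangle$, so each indeed satisfies $v(\bar{x}_s) \neq -\infty$ (otherwise it could not be a maximizer, assuming the max is finite), hence $\|\bar{x}_s\|_2 \le L$. I expect there to be no real obstacle here — the only thing to be careful about is handling the case where the argmax set defining $\bar{x}_s$ is empty or the supremum is $+\infty$, but the paper has already (in its footnote on~\eqref{eq: stochastic bilevel program}) assumed optima exist, so this is a non-issue and the two-case split used in \Cref{lemma: lipschitz continuity 1} can simply be replaced by the symmetric-WLOG phrasing above or kept verbatim for consistency with the preceding lemma.
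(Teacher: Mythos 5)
Your proof is correct and is essentially the paper's argument: the paper's one-line proof (``follows by \Cref{lemma: lipschitz continuity 1} and $\{\bar{x}_1,\ldots,\bar{x}_S\}$ being a subset of every possible package'') is just shorthand for rerunning the maximizer-feasibility-plus-Cauchy--Schwarz argument over the restricted feasible set, which is exactly what you spell out. Your added remark that each $\bar{x}_s$ satisfies $v(\bar{x}_s)\neq -\infty$ and hence $\|\bar{x}_s\|_2 \le L$ is a detail the paper leaves implicit, and is handled correctly.
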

\begin{proof}
Follows by \Cref{lemma: lipschitz continuity 1} and $\{\bar{x}_1, \ldots, \bar{x}_S\}$ being a subset of every possible package. 
\end{proof}

Lipschitz-continuity of $\psi$ and $\varphi$ is important as the Wasserstein distance is of so-called $\zeta$-structure which is closely linked to Lipschitz-continuity~\citep{rachev2013}.
With the help of \Cref{lemma: lipschitz continuity 1} and \Cref{lemma: lipschitz continuity 2}, the proof of \Cref{theorem: wasserstein} can be carried out as follows:

\textnormal{\textbf{Proof of \Cref{theorem: wasserstein}}}
First of all, we relate $\text{E} \big[ \Gamma(\lambda) \big]$ to distributions $\mathcal{P}$ and $\mathcal{Q}$ by:
\begin{subequations} \label{eq: proof theorem}
\begin{align}
& \text{E}\big[ \Gamma(\lambda) \big] \\
& = \int_\Omega \Gamma(\lambda) \; \mathcal{P}(d\lambda)  \\ 
& = \int_\Omega \Gamma(\lambda) \; \big(\mathcal{P}+\mathcal{Q}-\mathcal{Q}\big)(d\lambda) 
\end{align}
Note that $\int_\Omega \Gamma(\lambda) \; \mathcal{Q}(d\lambda)=0$ since the bidding strategy~\eqref{eq: heuristic I} ensures that if a scenario $\lambda_s$ of the distribution $\mathcal{Q}$ occurs, then the agent is maximising its profit. Hence, we can continue by:
\begin{align}
& = \int_\Omega \Gamma(\lambda) (\mathcal{P}-\mathcal{Q})(d\lambda)  \\
& = \int_{\Omega} \Gamma(\lambda_1) \; \mathcal{P}(d\lambda_1) - \int_{\Omega} \Gamma(\lambda_2) \; \mathcal{Q}(d\lambda_2) . \label{eq: proof theorem II} 
\end{align}
By definition of $\mathcal{P}$ and $\mathcal{Q}$ as probability measures follows $\int_\Omega \mathcal{P}(d\lambda_1) = \int_\Omega \mathcal{Q}(d\lambda_2)=1$ and hence, we can continue \eqref{eq: proof theorem II} by:
\begin{align}
& = \int_{\Omega \times \Omega} \Gamma(\lambda_1) \; \mathcal{P}(d\lambda_1)\mathcal{Q}(d\lambda_2) \nonumber \\
& - \int_{\Omega \times \Omega} \Gamma(\lambda_2) \; \mathcal{P}(d\lambda_1)\mathcal{Q}(d\lambda_2) \label{eq: proof theorem III} . 
\end{align}
Since the Euclidean space is a \textit{Polish space}, Theorem 4.1 from \cite{villani2009} ensures the existence of a solution $\mu^\ast(\lambda_1,\lambda_2)$ to \eqref{eq: wasserstein distance}. By definition, $\mu^\ast$ is a joint probability distribution with marginals $\mathcal{P}$ and $\mathcal{Q}$. Consequently, we can continue \eqref{eq: proof theorem III} by:
\begin{align}
& = \int_{\Omega\times\Omega} \Gamma(\lambda_1) - \Gamma(\lambda_2) \;\; \mu^\ast(d\lambda_1,d\lambda_2) \label{eq: proof theorem IV} . 
\end{align}
Using the definition of the \textit{Lebesgue integral}, this can be bounded above by:
\begin{align}
& \le \int_{\Omega\times\Omega} | \Gamma(\lambda_1) - \Gamma(\lambda_2) | \;\; \mu^\ast(d\lambda_1,d\lambda_2) \label{eq: proof theorem V} . 
\end{align}
By \Cref{lemma: lipschitz continuity 1} and \Cref{lemma: lipschitz continuity 2} follows that $\Gamma(\lambda) = \psi(\lambda) - \varphi(\lambda)$ is Lipschitz continuous with constant $L = 2 \cdot \max \big\{ \|x\|_2 \ \big | \ x\in\mathbb{R}^T, v(x) \neq -\infty \big\}$. Hence, we can continue \eqref{eq: proof theorem V} by:
\begin{align}
& \le L \cdot \int_{\Omega\times\Omega} \| \lambda_1, \lambda_2 \|_2 \;\; \mu^\ast(d\lambda_1,d\lambda_2) .
\end{align}
\end{subequations}
Finally, by $\mu^\ast$ being a solution to \eqref{eq: wasserstein distance} follows the statement.

\section{Package Bid Selection as Bilevel Knapsack Problem}\label{app: bilevel knapsack}

The package bid selection problem~\eqref{eq: stochastic bilevel program III}, along with its reformulations~\eqref{eq: binary program} and~\eqref{eq: binary program I}, can be equivalently expressed as the following bilevel knapsack problem:  
\begin{subequations}\label{eq: knapsack}
\begin{align} 
\max_{\delta} & \quad \sum_{s\in\mathcal{S}} \pi_s \cdot \sum_{k\in\mathcal{K}} U_{ks} \cdot \gamma_{ks} \label{eq: knapsack 0} \\
\text{s.t.} & \quad \sum_{k\in\mathcal{K}} \delta_k \le B \label{eq: knapsack 1} \\
& \quad \delta_k \in \{0,1\} \quad \forall k \in \mathcal{K} \label{eq: knapsack 2}  \\
& \quad \big(\gamma_{ks}\big)_{k\in\mathcal{K},s\in\mathcal{S}} \in \label{eq: knapsack 3} \\
& \quad \argmax_{\gamma} \quad \sum_{s\in\mathcal{S}} \pi_s \cdot \sum_{k\in\mathcal{K}} U_{ks} \cdot \gamma_{ks}\label{eq: knapsack 4}  \\
& \qquad \quad \text{s.t.} \quad \gamma_{ks} \le \delta_k \quad \forall s\in\mathcal{S}, \; k\in\mathcal{K} \label{eq: knapsack 5} \\
& \qquad \qquad \quad \sum_{k\in\mathcal{K}} \gamma_{ks} \le 1 \quad \forall s\in\mathcal{S} \label{eq: knapsack 6} \\
& \qquad \qquad \quad \gamma_{ks} \in \{0,1\} \quad \forall s\in\mathcal{S}, \; k\in\mathcal{K} . \label{eq: knapsack 7}
\end{align}
\end{subequations}
where $U_{ks} = \bar{v}_k - \langle \lambda_s, \bar{x}_{k} \rangle$ represents the utility of selecting package $k$ in scenario $s$.
The leader \eqref{eq: knapsack 0}-\eqref{eq: knapsack 3} (``bidder'') selects up to $B$ packages to include in the knapsack.
The follower \eqref{eq: knapsack 4}-\eqref{eq: knapsack 7} (``auctioneer'') can then take one package out of the knapsack for each scenario $s\in\mathcal{S}$.
The leader and follower have the same objective; they both try to maximize the expected profit of the selected packages.

This bilevel knapsack problem is especially tractable since the leader and follower share the same objective, causing the bilevel structure to collapse into a single-level problem with a totally unimodular constraint matrix (\Cref{proposition: lp relaxation}).

\section{Optimizing the Conditional Value at Risk} \label{app: conditional value at risk}

In formulating the bid selection problem~\eqref{eq: stochastic bilevel program}, we assumed a risk-neutral agent maximizing expected profit. However, risk-averse agents may prefer to optimize expected profit in the worst $1-\beta\%$ of cases - that is, to maximize the $\beta$-CVaR. Using the well-known reformulation by \cite{rockafellar2000optimization}, this can be expressed as:
\begin{align*}
    \max_{x_b,p_b, \alpha, \eta_s} & \quad \alpha - \frac{1}{1-\beta} \cdot \sum_{s\in\mathcal{S}} \pi_s \cdot \eta_s \\
    \text{s.t.} & \quad \eta_s \ge \alpha - v(x_s^\ast) + \langle \lambda_s, x_s^\ast \rangle  \quad \forall s \in \mathcal{S} \\
    & \quad \eta_s \ge 0, \; \alpha \in \mathbb{R} \quad \forall s \in \mathcal{S} \\
    & \quad \eqref{eq: most profitable bid} \;\;\; \text{and} \;\; (x_b,p_b)\in\mathbb{R}^{T+1} .
\end{align*}
Theorem 1 in \cite{rockafellar2000optimization} shows that the objective function $\alpha - \frac{1}{1-\beta} \sum_{s\in\mathcal{S}} \pi_s \cdot \eta_s$ corresponds to the $\beta$-CVaR, with $\alpha$ representing the $\beta$-VaR at the optimum. Theorem 2 confirms the correctness of the reformulation.

\section{Thermal Generator} \label{app: thermal generator}

The thermal generator model used here is based on the classic unit commitment approach proposed by~\cite{carrion2006computationally}, which was also applied by~\cite{karasavvidis2021optimal} in the context of day-ahead auction participation of generators. The cost function $v(x)$ is defined by \eqref{eq: thermal generator 14}, while the operational constraints governing the model are expressed in \eqref{eq: thermal generator 1}-\eqref{eq: thermal generator 13}. For a detailed description of the model, see~\cite{carrion2006computationally} or~\cite{karasavvidis2021optimal}.

\begin{subequations} \label{eq: thermal generator}
The power generated and sold by the thermal generator is the sum of the power output $g_{tk}$ from each block $k\in\mathcal{K}$ of the plant:
\begin{align}
x_{t} = \sum_{k \in \mathcal{K}} g_{tk} \qquad \forall t \in \mathcal{T} . \label{eq: thermal generator 1}
\end{align}
The production limit $P^{Max}_k$ for each block $k\in\mathcal{K}$ is enforced by:
\begin{align}
- g_{tk} \le u_{t} \cdot P^{Max}_k \qquad \forall k \in \mathcal{K}, \; \forall t \in \mathcal{T} , \label{eq: thermal generator 2}
\end{align}
where $u_{t}$ is the binary variable describing whether the generator is committed or not. Remember that $x<0$ means selling power, leading to the minus in front of $g_{tk}$. The minimum stable generation $P^{Min}$ is enforced by:
\begin{align}
P^{Min} \cdot u_{t} \le \sum_{k\in\mathcal{K}} - g_{tk} \qquad \forall t \in \mathcal{T} . \label{eq: thermal generator 3} 
\end{align}
The ramp-up limit $R^{Up}$ and the ramp-down limit $R^{Dn}$ are enforced by:
\begin{align}
- R^{Dn} \le x_{t-1} - x_{t} \le R^{Up} \qquad \forall t \in \mathcal{T} . \label{eq: thermal generator 4} 
\end{align}
Start-up costs $c^{up}_{t}$ and shut-down costs $c^{up}_{t}$ in scenario $s$ and period $t$ can be expressed by constraints:
\begin{align}
c^{up}_{t} \ge (u_{t} - u_{t-1}) \cdot K^{Up} \qquad \forall t \in \mathcal{T} \label{eq: thermal generator 5} \\
c^{dn}_{t} \ge (u_{t-1} - u_{t}) \cdot K^{Dn} \qquad \forall t \in \mathcal{T} , \label{eq: thermal generator 6} 
\end{align}
where $K^{Up}$ represent the start-up and $K^{Dn}$ the shut-down cost. The initial-on-time $H^{Up}$ respectively initial-off-time $H^{Dn}$ is enforced by constraints: 
\begin{align}
\sum_{t=1}^{H^{Up}} (1- u_{t}) = 0   \label{eq: thermal generator 7} \\
\sum_{t=1}^{H^{Dn}} u_{t} = 0  , \label{eq: thermal generator 8} 
\end{align}
Finally, the minimum-up-time $T^{Up}$ and minimum-down time $T^{Dn}$ is enforced by the constraints:
\begin{align}
T^{Up} \cdot (u_{t} - u_{t-1}) \le \sum_{j=t}^{t+T^{Up}-1} u_{t} \nonumber \\
\qquad \forall t \in [H^{Up}+1, \ldots, T - T^{Up}+1] \label{eq: thermal generator 9} 
\end{align}
\begin{align}
-T^{Dn} \cdot (u_{t} - u_{t-1}) \le \sum_{j=t}^{t+T^{Dn}-1} (1-u_{t}) \nonumber \\
\qquad \forall t \in [H^{Dn}+1, \ldots, T - T^{Dn}+1] \label{eq: thermal generator 10} 
\end{align}
\begin{align}
\sum_{j=t}^{T} u_{j} - (u_{t} - u_{t-1}) \ge 0 \nonumber \\
\qquad \forall  \; t \in [T-T^{Up}+2,\ldots,T] \label{eq: thermal generator 11} 
\end{align}
\begin{align}
\sum_{j=t}^{T} (1 - u_{sj}) + (u_{t} - u_{t-1}) \ge 0 \nonumber \\
\qquad \forall  \; t \in [T-T^{Dn}+2,\ldots,T] , \label{eq: thermal generator 12} 
\end{align}
designed by \cite{carrion2006computationally} to obtain a computationally efficient formulation. Finally, the constraints on the auxiliary variables are given by: 
\begin{align}
g_{tk} \le 0, \; u_{t}\in\{0,1\}, \; c^{up}_{t}, c^{dn}_{t} \ge 0 \nonumber \\
\qquad \forall k \in \mathcal{K}, \;  \; t \in \mathcal{T} , \label{eq: thermal generator 13} 
\end{align}
and the cost of power generation by:
\begin{align}
v(x) = \sum_{t \in\mathcal{T}} - K^{F} \cdot u_{t} - c^{up}_{t} - c^{dn}_{t}  + \sum_{k \in \mathcal{K}} K^{M}_{k} \cdot g_{tk} , \label{eq: thermal generator 14} 
\end{align}
where $K^{F}$ are the no-load costs and $K^{M}_{k}$ are the marginal cost of production in block $k\in\mathcal{K}$.
\end{subequations}

\section{Battery Storage System} \label{app: battery}

The following model for a battery storage system is commonly used for power systems applications of a battery, for instance, in \cite{de2019implications} or \cite{karasavvidis2023optimal}.

\begin{subequations} \label{eq: battery}
The buy and sell volume in hour $t$ and scenario $s$ is the difference between charging $g_{t}$ and discharging $d_{t}$:
\begin{align}
x_{t} = g_{t} - d_{t} \qquad \forall t \in \mathcal{T} \label{eq: battery 1} 
\end{align}
Charging and discharging are bounded by the charging respectively discharging limits $\overline{g}$ and $\overline{d}$:
\begin{align}
0 \le g_{t} \le \overline{g} \cdot \delta_{t} \qquad \forall t \in \mathcal{T} \label{eq: battery 2} \\
0 \le d_{t} \le \overline{d} \cdot (1-\delta_{t}) \qquad \forall t \in \mathcal{T} \label{eq: battery 3} \\
\end{align}
The binary variable $\delta_{t} \in \{0,1\}$ indicates whether the storage is charging or discharging. The state-of-charge $e_{t}$ is given by the balance equations:
\begin{align}
e_{t} = e_{t-1} + n^g \cdot g_{t} - \frac{d_{t}}{n^d} \qquad \forall t \in \mathcal{T} , \label{eq: battery 5} 
\end{align}
where $n^g$ is the charging, and $n^d$ is the discharging efficiency.
The state-of-charge must be kept between its lower $\underbar{E}$ and upper $\overline{E}$ state-of-charge limit:
\begin{align}
-\underbar{E} \le e_{t} \le \overline{E} \qquad \forall t \in \mathcal{T} \label{eq: battery 6} 
\end{align}
We assume that the state-of-charge must equal the initial $E_0$:
\begin{align}
e_{T} = E_0 \label{eq: battery 7} 
\end{align}
Finally, we ignore degradation and maintenance costs, and the marginal costs are thus zero:
\begin{align}
v(x) = 0 \label{eq: battery 8} 
\end{align}
\end{subequations}

\section{District Heating Utility} \label{app: district heating utility}

The model of the district heating utility is oriented on the model in~\cite{bobo2021price}.
The task of the utility is to serve the heat load $D_t$ in every hour $t$. For that, it can produce heat by a heat pump consuming power $x_{t}$, produce heat by a gas boiler consuming gas $y_{t}$, or curtail the load $z_{t}$. Moreover, it can discharge heat storage by $d_{t}$ or charge heat storage by $g_{t}$.

\begin{subequations} \label{eq: demand response}
The heat balance equations are given by:
\begin{align}
\eta^{Hp} \cdot x_{t} + \eta^{Gas} \cdot y_{t} + d_{t} - g_{t} = D_t - z_{t} \qquad \forall t \in \mathcal{T} , \label{eq: demand response 1} \end{align}
where $\eta^{Hp}$ is the efficiency of the heat pump and $\eta^{Gas}$ is the efficiency of the gas boiler. The state-of-charge of the heat storage is given by:
\begin{align}
e_{t} = (1-\eta^{Loss}) \cdot e_{t-1} + g_{t} - d_{t} \qquad \forall t \in \mathcal{T} \label{eq: demand response 2} , \\
\end{align}
where $\eta^{Loss}$ is the percentage of heat energy lost between two hours. We assume that the state-of-charge must be, in the end, equal to the initial state-of-charge $E_0$: 
\begin{align}
    e_{Ts} = E_0 \label{eq: demand response 3} 
\end{align}
The state-of-charge, charge, and discharge, as well as the heat generated by the gas boiler and heat pump, are bounded
\begin{align}
e_{t} \in [0,\overline{e}], \;g_{t} \in [0,\overline{g}], \; d_{t} \in [0,\overline{d}], \; z_{t}\in [0,D_t], \nonumber \\
\qquad x_{t} \in [0,\overline{x}], \; y_{t} \in [0,\overline{y}] \qquad \forall  \; t \in \mathcal{T} \label{eq: demand response 4}   
\end{align}
Finally, the valuation comprises the revenue from serving the load and the cost of gas:
\begin{align}
v(x) = K \cdot (D_t - z_{t}) - C \cdot y_{t} \label{eq: demand response 5} ,
\end{align}
 where $K$ is the value of the served load and $C$ is the cost of gas.
\end{subequations}

\bibliography{refs} 

\end{document}